\newtheorem{Thm}{Theorem}
\newtheorem{Lem}[Thm]{Lemma}
\newtheorem{Prop}[Thm]{Proposition}
\newtheorem{Cor}[Thm]{Corollary}
\theoremstyle{definition}
\newtheorem{Def}[Thm]{Definition}
\theoremstyle{remark}
\newtheorem{Rem}[Thm]{Remark}
\numberwithin{equation}{section}
\def\RR{{\mathbb{R}}}
\def\PP{{\mathbb{P}}}
\def\PP{{\mathbb{P}}}
\def\bigO{\mathcal{O}}
\title[Envy-free Cake cutting]{Envy-free Cake cutting:\\  a polynomial number of queries\\ with high probability}
\date{\today}
\author[G.~Ch\`eze]{Guillaume Ch\`eze}
\address{Guillaume Ch\`eze: Institut de Math\'ematiques de Toulouse\\
Universit\'e Paul Sabatier \\
118 route de Narbonne\\
31 062 TOULOUSE cedex 9, France}
\email{guillaume.cheze@math.univ-toulouse.fr}
\begin{document}


\begin{abstract}
In this article we propose a probabilistic framework in order to study the fair division of a divisible good, e.g. a cake, between $n$ players. Our framework contains two situations. The first corresponds to the ``Full independence model" used  in the study of fair division of indivisible goods. The second is in the spirit of smoothed analysis. We show that, in this framework,   there exists an envy-free division  algorithm satisfying  the following probability estimate:
$$\mathbb{P}\big( C(\mu_1, \ldots,\mu_n) \geq n^{7+b}\big) = \bigO\Big(n^{-\frac{b-1}{3}+1+o(1)}\Big),$$
where $\mu_1,\ldots, \mu_n$ correspond to the preferences of the $n$ players,
$C(\mu_1, \ldots,\mu_n)$ is the number of queries used by the algorithm and $b>4$.
In particular, this gives
$$\lim_{n \rightarrow + \infty}\mathbb{P}\big( C(\mu_1, \ldots,\mu_n) \geq n^{12}\big) = 0.$$
It must be noticed that nowadays few things are known about the complexity of envy-free division algorithms. Indeed, Procaccia has given a lower bound in $\Omega(n^2)$ and Aziz and Mackenzie have given an upper bound in $n^{n^{n^{n^{n^{n}}}}}$. As our estimate means that we have $C(\mu_1, \ldots, \mu_n)<n^{12}$ with a high probability, this gives a new insight on the complexity of envy-free cake cutting algorithms.\\
Our result follows from a study of Webb's algorithm and a theorem of Tao and Vu about the smallest singular value of a random matrix.\\

\smallskip
\noindent \textsc{Keywords:} computational fair division, cake cutting, probability, random matrices, singular values, smoothed analysis.
\end{abstract}


\maketitle

\section*{Introduction}
In this article we study the problem of fair resource allocation.  The goal in this problem is to share a  heterogeneous good between different players or agents. This good can be for example: a cake, land, time or computer memory. This problem is old. For example, the Rhind mathematical papyrus contains problems about the division of loaves of bread and about the partition of plots of land. In the Bible we find the famous ``Cut and Choose" algorithm between Abraham and Lot, and in the greek mythology we find the trick at Mecone. More recently, the ``Cut and Choose" protocol has been used in the United Nations Convention on the Law of the Sea (December 1982, Annex III, article 8).\\

  The problem of fair division has been formulated in a scientific way by Steinhaus in 1948, see \cite{Steinhaus}. Nowadays, there exist several papers, see e.g. \cite{AzizMackenzie, BJK, BramsTaylorarticle, DubinsSpanier, EdmondsPruhs, EvenPaz, Pikhurko, Procacciasurvey, RoberstonWebbarticle, Thomson2006}, and books about this topic, see e.g. \cite{Barbanel, BramsTaylor, Procacciachapter, RobertsonWebb}. These results appear in the mathematics, economics, political science, artificial intelligence and computer science literature. Recently, the cake cutting problem has been studied intensively by computer scientists for solving resource allocation problems in multi agents systems, see e.g.~\cite{Branzei, Chen, Chevaleyre06, Dynamic,Sziklai}. \\

 Throughout this article, the cake will be a heterogeneous good represented by the interval $\mathcal{C}=[0,1]$. This assumption is classical and not restrictive for our study.\\
We also consider $n$ players and we associate to each player a non-atomic probability measure $\mu_i$ on the interval $\mathcal{C}=[0,1]$. More precisely, we suppose that the measures $\mu_i$ are  absolutely continuous with respect to the Lebesgue measure.
These measures represent the preferences, the utility functions of the players. We have $\mu_i(\mathcal{C})=1$ for all $i$.\\
 The problem in this situation is to get a fair division of $\mathcal{C}=\mathcal{C}_1\sqcup \ldots \sqcup \mathcal{C}_n$, where the $i$-th player gets $\mathcal{C}_i$.\\

When we study fair divisions, we have to define   ``fair" precisely. Indeed, several notions exist.
\begin{enumerate}
\item[$\bullet$] We say that a division is \emph{proportional} when for all $i$, we have 
$$\mu_i(\mathcal{C}_i) \geq 1/n.$$
\item[$\bullet$] We say that a division is \emph{equitable} when for all $i\neq j$, we have 
$$\mu_i(\mathcal{C}_i)=\mu_j(\mathcal{C}_j).$$
\item[$\bullet$] We say that a division is \emph{exact} in the ratios $(\alpha_1,\alpha_2,\ldots,\alpha_n)$, where $\alpha_i \geq 0$ and $\alpha_1+\alpha_2+\cdots+\alpha_n=1$, when for all $i$ and $j$ we have 
$$\mu_i(\mathcal{C}_j)=\alpha_j.$$
\item[$\bullet$] We say that a division is \emph{envy-free} when for all $i \neq j$, we have 
$$\mu_i(\mathcal{C}_i) \geq  \mu_i(\mathcal{C}_j).$$
\end{enumerate}
There also exist several classical properties to study a fair division: Pareto optimality, monotonicity and contiguity, see e.g.~\cite{AzizYe,Segal-Halevi}.\\
 Our paper  deals with a practical problem: the computation  of envy-free fair divisions.\\

 In order to describe algorithms we  need a model of computation. There exist two main classes of cake cutting algorithms: discrete and continuous protocols (also called moving knife methods). Here, we study discrete algorithms. These kinds of algorithms can be  described thanks to the  classical model introduced by Robertson and Webb and formalized by Woeginger and Sgall in \cite{Woeg}. In this model we suppose that a mediator interacts with the agents. The mediator asks two type of queries: either cutting a piece with a given value, or evaluating a given piece. More precisely, the two type of queries allowed are:
\begin{enumerate}
\item $eval_i(x,y)$: Ask agent $i$ to evaluate the interval $[x,y]$. This means return $\mu_i([x,y])$.
\item $cut_i(x,a)$: Ask agent $i$ to cut a piece of cake $[x,y]$ such that $\mu_i([x,y])=a$. This means: for given $x$ and $a$, return $y$ such that $\mu_i([x,y])=a$.
\end{enumerate} 
In the Robertson-Webb model the mediator can adapt the queries based on  the previous answers given by the players. In this model, the complexity counts the number of queries necessary to get a fair division. For a rigorous description of this model we can consult: \cite{Branzei2017,Woeg}.\\
We can remark that this model of computation does not take into account the nature and the number of operations performed by the mediator. The Blum-Shub-Smale-Robertson-Webb model of computation introduced in \cite{ChezeBSSRW}  avoids these drawbacks.\\

The first studied notion of fair division has been proportional  division, \cite{Steinhaus}. Proportional  division is a simple and well understood notion. In \cite{Steinhaus} Steinhaus explains the Banach-Knaster algorithm which gives a proportional  division. There also exists an optimal algorithm to compute a proportional  division in the Robertson-Webb model, see \cite{EdmondsPruhs, EvenPaz}. The complexity of this algorithm is in $\bigO\big(n\log(n)\big)$. Furthermore, the portion $\mathcal{C}_i$ given to the $i$-th player in this algorithm is an interval.\\

Exact divisions in the ratios $(\alpha_1, \ldots, \alpha_n)$ exist for all ratios $(\alpha_1, \ldots, \alpha_n)$. The existence of this kind of fair division follows from a convexity theorem given by Lyapounov, see e.g. \cite{DubinsSpanier}. When we have  $\alpha_i=1/n$, for all $i$, we just say that the division is exact. Unfortunately, there exists no algorithm to compute exact divisions, see \cite{RobertsonWebb}.\\
Equitable fair division is of the same kind.  Indeed, there exist equitable fair divisions where each $\mathcal{C}_i$ is an interval, see \cite{Cechexistence,Chezeequitable,Segal-Halevi}. However, there do not exist discrete protocols computing an equitable fair division, see \cite{Cech,ChezeBSSRW,ProcWang}. \\

Envy-free fair division is difficult to obtain in practice. Indeed, whereas envy-free fair divisions where each $\mathcal{C}_i$ is an interval exist, there does not exist an algorithm in the Robertson-Webb model computing such divisions. These results have been proved by Stromquist in \cite{Stromquistexist,Stromquist}.\\
 The first envy-free algorithm for $n$ players has been given by Brams and Taylor in  \cite{BramsTaylorarticle}. This algorithm was discovered approximatively 50 years after the first algorithm computing a proportional fair division. The Brams-Taylor algorithm has an unbounded complexity in the Robertson-Webb model. This means that we cannot bound the complexity of this algorithm in terms of the number of players only. A complexity study with ordinal numbers of these algorithms has been done in \cite{Gasarch}. It is only recently that a finite and bounded algorithm has been given to solve this problem, see \cite{AzizMackenzie}. The complexity of this algorithm is in $\bigO\Big(n^{n{^{n^{n^{n^{n}}}}}}\Big)$. When $n=2$, $n^{n{^{n^{n^{n^{n}}}}}}$ is bigger than the number of atoms in the universe\ldots
 A lower bound for envy-free division algorithm has been given by Procaccia in \cite{Procaccia-lowerbound}. This lower bound is in $\Omega(n^2)$.\\

We can remark that there is a huge difference between the complexity in the worst case $\bigO\Big(n^{n{^{n^{n^{n^{n}}}}}}\Big)$ and the lower bound  $\Omega(n^2)$. Therefore a natural question  arises: \\

\emph{Can we design an envy-free algorithm such that in practice the number of queries is smaller than $n^d$, where $d$ is a given degree, with a high probability ?}\\

In order to answer this question we have to define a probabilistc framework.\\
When we consider indivisible goods there exist two probabilistic models, see e.g.~ \cite{Bouveret2011}.\\

 The first model is the \emph{Full correlation model}. In this model we suppose that all agents have the same preference and all preferences are equiprobable. When we want to share several indivisible goods, this case corresponds to the worst case. However, in the cake cutting situation, if we suppose that $\mu_1=\ldots=\mu_n$ then we can easily obtain an envy-free division. Indeed, we ask the first player to cut the cake in $n$ equals portions. Thus the full correlation model is  not an interesting model in the cake cutting situation.\\
 
 The second model in the indivisible goods setting is the \emph{Full independence model}. In this model we suppose that all preferences are equiprobable and that all agents have independent preferences.  In the cake cutting setting, in order to obtain a similar situation we consider the following construction:\\
 
 First, we divide the interval $[0,1]$ into $n$ witness intervals 
 $$W_j=\Big[\dfrac{j-1}{n},\dfrac{j}{n}\Big], \textrm{ where  }j=1, \ldots,n.$$ 
Second, we remark that for all probabilistic measures $\mu_i$ on $[0,1]$, the vector\\ $\big(\mu_i(W_1),\ldots,\mu_i(W_n)\big)$ belongs to the standard $(n-1)$-simplex. Indeed, as $\mu_i$ is a probabilistic measure we have for all $j=1, \ldots,n$, $\mu_i(W_j) \geq 0$,  and 
$$\mu_i(W_1)+\cdots+\mu_i(W_n)=\mu_i(W_1 \sqcup \ldots \sqcup W_n)=\mu_i([0,1])=1.$$
When we consider a random measure $\mu_i$, it is natural to suppose that all witness intervals play the same role. For example, there is no reason to suppose that the players usually prefer the first part $W_1$ of the cake.\\

Our first probabilistic situation is thus the following:\\

 \emph{We suppose that the distribution of $\big(\mu_i(W_1),\ldots,\mu_i(W_n)\big)$ follows a uniform distribution over the standard $(n-1)$-simplex.}\\

A classical  way to obtain a uniform distribution on the standard $(n-1)$-simplex is the following, see \cite[Theorem 4.1]{Devroye}:
Consider $n$ independent random variables $X_i$ with probability density function $f_i(x)=e^{-x}$. Set $S=\sum_{i=1}^n X_i$ and $Y_i=X_i/S$, then $(Y_1,\ldots,Y_n)$ follows the uniform distribution on the standard $(n-1)$-simplex.\\

Furthermore, \emph{in this first setting we are going to suppose that agents have independent preferences}. This means for example that $\mu_1(W_j)$ is independent of $\mu_2(W_j)$. \\

Thus, in the cake cutting situation the \emph{Full independence model} means that we suppose that the following hypothesis holds:\\

\emph{ $H_1$: Taking randomly a matrix $\mathcal{M}=\big(\mu_i(W_j)\big)$
 means that we consider a random matrix $\mathcal{M}=(m_{ij})$ where 
$$m_{ij}=\dfrac{X_{ij}}{\sum_{k=1}^n X_{ik} }$$ and $X_{ij}$ are independent exponential random variables, i.e. with a probability density function $f_{ij}(x)=e^{-x}$ defined over $[0,+\infty[$.}\\

Our second probabilistic situation is in the spirit of smoothed analysis. Indeed, we are going to study the worst case complexity of an algorithm under slight random perturbations. More precisely, we consider this kind of situation:\\
There is a partition $[0,1]=\sqcup_{i=1}^n W_j$, but $W_j$ are not necessarily equal to $\Big[\dfrac{j-1}{n},\dfrac{j}{n}\Big]$.\\
Each agent evaluate all $W_j$ and then send these informations to the mediator. However, these values are transmitted over a communication channel in the presence of noise.\\

Therefore, the idea is  to consider a matrix $\mathcal{M}=(m_{ij})$ where 
\begin{itemize}
\item $m_{ij}=a_{ij}+\varepsilon_{ij}$,
\item $a_{ij} \in \RR^+$ correspond to $\mu_i(W_j)$ without perturbation, thus $\sum_{j=1}^n a_{ij}=1$, 
\item $\varepsilon_{ij}$ are independent and identically distributed random variables  with mean zero. They  correspond to  perturbations.
\end{itemize}

However, we cannot consider directly these kinds of matrices in our study for two reasons.\\
First, if the pertubation $\varepsilon_{ij}$ satisfies $\varepsilon_{ij}<-a_{ij}$  then we  get $m_{ij}=a_{ij}+\varepsilon_{ij}<0$. Thus we cannot interpreted $m_{ij}$ as a measure of $W_j$. Therefore, we are going to suppose that there exists $\varepsilon>0$ such that $a_{i,j}>\varepsilon$, for all $i,j$, and $|\varepsilon_{ij}|<\varepsilon$.
With this assumption we have $a_{ij}+\varepsilon_{ij}>0$. We remark that this means that all players are hungry: each interval $W_j$ have a positive value for all players. \\
Second, even if $\sum_{j=1}^n a_{ij}=1$, we do not necessarily have $\sum_{j=1}^n a_{ij}+\varepsilon_{ij}=1$. Then, the matrix $\mathcal{M}$ is not stochastic. This problem can be avoided if we consider the matrix with coefficients $m_{ij}=\dfrac{a_{ij}+\varepsilon_{ij}}{\sum_{k=1}^n (a_{ik}+\varepsilon_{ik}) }$.\\

Thus, in this  second situation, we consider the following hypothesis parametrized by $\varepsilon \in \,]0,1[$.\\

\emph{ $H_2(\varepsilon)$: Taking randomly a matrix $\mathcal{M}=\big(\mu_i(W_j)\big)$
 means that we consider a  matrix \mbox{$\mathcal{M}=(m_{ij})$} where 
$$m_{ij}=\dfrac{a_{ij}+\varepsilon_{ij}}{\sum_{k=1}^n (a_{ik}+\varepsilon_{ik}) }$$
$$ a_{ij}> \varepsilon , \quad \sum_{j=1}^na_{ij}=1.$$
and $\varepsilon_{ij}$ are independent and identically distributed random variables  with mean zero  such that $-\varepsilon < \varepsilon_{ij} <\varepsilon$.\\
}

\emph{In this second setting, we do not suppose that agents have independent preferences, we suppose that the noise correspond to independent and identically distributed random variables}.\\


We remark that in the two situations we have defined what is a random matrix $\mathcal{M}=\big(\mu_i(W_j)\big)$ and that we do not have defined what is a random measure. Indeed, instead of taking random measures and then constructing the matrix $\mathcal{M}$, we have directly defined the probability distribution for the matrix $\mathcal{M}$. This approach allows to obtain a simple and explicit probabilistic framework.\\

In this article, we  study the complexity of an envy-free fair division algorithm. We will denote by $C(\mu_1, \ldots,\mu_n)$ the number of queries used by this algorithm when the inputs are $\mu_1$, $\mu_2$, \ldots, $\mu_n$.

\begin{Thm} \label{thm:main} If we suppose that the hypothesis $H_1$ or $H_2(\varepsilon)$ is satisfied, then we have the following result:\\
There exists a protocol in the Robertson-Webb model of computation giving an envy-free fair division and such that for all $b>4$ we have the following probability estimate
$$\mathbb{P}\big( C(\mu_1, \ldots,\mu_n) \geq n^{7+b}\big) = \bigO\Big(n^{-\frac{b-1}{3}+1+o(1)}\Big).$$
\end{Thm}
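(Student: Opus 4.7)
The plan has two pillars matching the two tools announced in the abstract: a deterministic analysis of Webb's envy-free algorithm, and the Tao--Vu estimate on the smallest singular value of a random matrix.

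\textbf{Step 1 (deterministic complexity of Webb's algorithm).} I would first analyze Webb's protocol and bound $C(\mu_1,\ldots,\mu_n)$ by a polynomial in $n$ and in $1/s_n(\mathcal{M})$, where $\mathcal{M}=(\mu_i(W_j))_{1\le i,j\le n}$ is the stochastic matrix of witness-interval values and $s_n(\mathcal{M})$ its smallest singular value. The unboundedness of the worst-case complexity of Webb-type protocols comes from iterated trimming rounds: small ``slivers'' are cut off and used to manufacture irrevocable advantages, and the number of trims required to produce a usable advantage is controlled by how closely different measures can agree on the same piece. The quantity $s_n(\mathcal{M})$ is precisely a global measure of indistinguishability across the $n$ agents on the witness partition. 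Concretely, I would aim at an estimate of the form $C(\mu_1,\ldots,\mu_n) \le P(n)\, s_n(\mathcal{M})^{-q}$ for an explicit polynomial $P$ and integer $q$ extracted from the protocol.

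\textbf{Step 2 (probabilistic bound on $s_n(\mathcal{M})$).} Under $H_1$ the entries of $\mathcal{M}$ are $X_{ij}/\sum_k X_{ik}$ with i.i.d.\ exponential $X_{ij}$; since Tao--Vu only requires independence and bounded density of the entries, one can either condition on the row sums $S_i=\sum_k X_{ik}$ (which are sharply concentrated around $n$) or apply Tao--Vu first to $(X_{ij})$ and propagate the bound through the renormalization. Under $H_2(\varepsilon)$ the matrix $\mathcal{M}$ is a smooth perturbation of a deterministic stochastic matrix by the i.i.d.\ noise $(\varepsilon_{ij})$, which is exactly the smoothed-analysis setting for which Tao--Vu was designed; the renormalization factor is uniformly bounded from above and below thanks to the hypotheses $a_{ij}>\varepsilon$ and $|\varepsilon_{ij}|<\varepsilon$. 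In both cases I expect a bound of the form
$$\mathbb{P}\bigl(s_n(\mathcal{M}) \le n^{-\gamma}\bigr) = \bigO\bigl(n^{-f(\gamma)+o(1)}\bigr)$$
with $f$ linear in $\gamma$.

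\textbf{Step 3 (assembling the two).} The two steps glue together by a simple chain of events: $\{C(\mu_1,\ldots,\mu_n) \ge n^{7+b}\}$ forces $s_n(\mathcal{M})$ to drop below a specific power $n^{-\gamma(b)}$, determined by $q$ and $\deg P$, and substituting into the singular-value estimate should produce the tail $\bigO(n^{-(b-1)/3+1+o(1)})$. The precise numeric exponents $7$ and $(b-1)/3$ should fall out by balancing the deterministic polynomial degree with the linear function $f$ from Tao--Vu; I would fix the constants by working the arithmetic in reverse from the target exponent.

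\textbf{Main obstacle.} The hard part is clearly Step 1: converting the notoriously unbounded complexity of Webb's protocol into an explicit polynomial function of $1/s_n(\mathcal{M})$. Existing analyses express the worst-case bound in terms of an opaque ``irrational ratio'' attached to the measures, and the key new work is to pin this ratio down quantitatively via linear algebra on the witness-interval matrix. This will require revisiting the termination proof of Webb's trimming procedure round by round and showing that each trim makes verifiable progress that can be tracked against $s_n(\mathcal{M})$. Step 2 is then the application of a powerful off-the-shelf theorem, modulo the routine technical work of checking that our two random-matrix models fall inside its hypotheses.
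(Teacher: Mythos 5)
Your three-step architecture (bound $C$ by a polynomial in $n$ and $\sigma_n(\mathcal{M})^{-1}$, bound the lower tail of $\sigma_n(\mathcal{M})$ via Tao--Vu, chain the two events) is exactly the paper's skeleton, and your second option in Step~2 --- apply Tao--Vu to the unnormalized matrix and propagate the bound through the row renormalization --- is also the route the paper takes: one writes $\mathcal{M}=\mathcal{D}\mathcal{X}$ with $\mathcal{D}$ diagonal, uses $\sigma_n(\mathcal{D})\sigma_n(\mathcal{X})\le\sigma_n(\mathcal{M})$, shows $\sigma_n(\mathcal{D})\ge n^{-3/2}$ except with negligible (resp.\ zero) probability under $H_1$ (resp.\ $H_2(\varepsilon)$), and applies Tao--Vu to $\mathcal{X}=M+\mathcal{Y}$ with $\gamma=3/2$.

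However, your Step~1 misidentifies the algorithm, and this matters because the ``main obstacle'' you flag does not actually exist. The protocol the paper analyzes is Webb's \emph{super envy-free} algorithm (the one from ``An algorithm for super envy-free cake division''), not a Brams--Taylor/Selfridge--Conway style iterated-trimming procedure with irrevocable advantages. There are no trimming rounds to track. Webb's algorithm is a one-shot procedure: compute $\mathcal{M}^{-1}=(\tilde m_{ij})$, set $t=\min_{i,j}\tilde m_{ij}$ and $\delta=\frac{n-1}{n(1-tn)}$, form a target stochastic matrix $\mathcal{N}$ and the ratio matrix $\mathcal{R}=\mathcal{M}^{-1}\mathcal{N}$, then run $n$ copies of the known $\varepsilon$ near-exact division subroutine with $\varepsilon=\delta/n^2$. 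The entire query count is $\bigO(\max(1,|t|)\,n^{6.5})$, and the dependence on $\sigma_n(\mathcal{M})^{-1}$ drops out of the elementary bound $|t|\le\max_{ij}|\tilde m_{ij}|\le\|\mathcal{M}^{-1}\|_2=\sigma_n(\mathcal{M})^{-1}$. So Step~1 is two lines of linear algebra, not a delicate per-round termination argument; the ``opaque irrational ratio'' you anticipate pinning down does not arise here.

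Two smaller points: (i) the protocol also needs a deterministic fallback when $\det(\mathcal{M})=0$ (the paper calls Aziz--Mackenzie), since Webb's algorithm requires $\mathcal{M}^{-1}$; this event has probability zero under $H_1$ and $H_2(\varepsilon)$, so it does not affect the tail, but it is needed for the protocol to be total. (ii) Your Step~2 conditioning option would also need more than ``sharp concentration'' of the row sums, because Tao--Vu is applied to an unnormalized, fixed-mean perturbation model; the $\mathcal{D}\mathcal{X}$ factorization is the cleaner and the correct way in.
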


This theorem says: the bigger the number of queries, the smaller the probability.\\

We recall that $f(n)=\bigO\big(g(n)\big)$ means that  there exists a constant $C$ and an integer $n_0$ such that for all $n\geq n_0$, we have $|f(n)|\leq Cg(n)$.\\
The notation $o(1)$ refers to a function $f(n)$ such that $\lim_{n \rightarrow +\infty}  f(n)=0$.\\

\textbf{Examples:}\\
$\bullet$ If we choose $b=5$ then
$$-\dfrac{b-1}{3}+1+o(1) =-\dfrac{1}{3}+o(1).$$
When $n$ is big enough we can suppose $o(1)<1/6$ and then in this case
 $$\bigO\Big(n^{-\frac{b-1}{3}+1+o(1)}\Big) = \bigO\big(n^{-1/6}\big).$$
 This gives
 $$\lim_{n \rightarrow +\infty}  \mathbb{P}\big( C(\mu_1, \ldots,\mu_n) \geq n^{12}\big) = 0.$$\\
 
\noindent$\bullet$ If we choose $b=11$ then 
$$-\dfrac{b-1}{3}+1+o(1) =-\dfrac{7}{3}+o(1).$$
 When $n$ is big enough we can suppose $o(1)<1/3$ and then in this case
 $$\bigO\Big(n^{-\frac{b-1}{3}+1+o(1)}\Big) = \bigO\big(n^{-2}\big).$$
Thus Theorem~\ref{thm:main} gives
$$\mathbb{P}\big( C(\mu_1, \ldots,\mu_n)\geq n^{18} \big)= \bigO\Big( \dfrac{1}{n^{2}}\Big).$$

These bounds are not very sharp but they give a precise statement of the following idea:  when $n$ is big the probability that the algorithm uses more than $n^{12}$ (or $n^{18}$) queries  is very small.\\
At last, we remark that this theorem does not give the expected numbers of queries used by the algorithm.\\

\textbf{Strategy of the algorithm and structure of the paper}\\
The algorithm proposed in this article is just a slight modification of Webb's super envy-free division algorithm. Webb's algorithm constructs an envy-free division from the matrix $\mathcal{M}=\big(\mu_i(W_j)\big)$ when $\det(\mathcal{M})\neq 0$. The algorithm that we propose works as follows: if $\det(\mathcal{M})\neq 0$ then use Webb's algorithm else use another envy-free algorithm.\\
When the hypothesis $H_1$ or $H_2(\varepsilon)$ is satisfied the probability that $\det(\mathcal{M})=0$ is equal to zero. Thus, in practice our algorithm almost always corresponds to Webb's algorithm. As the number of queries needed in Webb's algorithm can be written in terms of the smallest singular value of $\mathcal{M}$, the strategy to prove Theorem~\ref{thm:main} relies on a probabilistic study of the smallest singular value of $\mathcal{M}$.\\

The structure of this article in thus the following:\\
In the first section, we recall what is a super envy-free fair division and we also recall Webb's super envy-free algorithm. Then, we give our algorithm. In Section~2, we study the number of queries used by this algorithm. This leads us to recall some standard results on singular values of a matrix and to write the complexity of the algorithm in terms of the smallest singular value of the matrix $\mathcal{M}$. In Section 3, we use a theorem of Tao and Vu, see \cite{TaoVu}, about the probability that $\mathcal{M}$ has a small singular value. This theorem will be the key point in the proof of Theorem~\ref{thm:main}.

\section{The algorithm}
\subsection{Super envy-free algorithm}
Super envy-free fair division is a strong kind of envy-free division. This notion has been introduced and studied by Barbanel, see \cite{Barbanelarticle,Barbanel}.
\begin{Def}
We say that a division is \emph{super envy-free} when for all $i \neq j$, we have  
$$\mu_i(\mathcal{C}_i)>\dfrac{1}{n}>\mu_i(\mathcal{C}_j).$$
\end{Def}

This definition says that this division is proportional and all players think to have stricly more than  other players.  Of course, this kind of fair division is not always possible. For example, if $\mu_1=\mu_2=\cdots=\mu_n$, then the previous inequality is not possible. Indeed,  we cannot have $\mu_1(\mathcal{C}_1)>1/n>\mu_2(\mathcal{C}_1)=\mu_1(\mathcal{C}_1)$.\\
However, a super envy-free fair division exists when the measures $\mu_i$ are linearly independent.
\begin{Def}
Let $\mu_1$, \ldots, $\mu_n$ be $n$ measures on a measurable set $(\mathcal{C},\mathcal{B})$, where $\mathcal{B}$ is the Borel $\sigma$-algebra.  We say that these measures are linearly independent when they are linearly independent as functions from $\mathcal{B}$ to $[0,1]$.
\end{Def}

\begin{Thm}[Barbanel's theorem]
A super envy-free division exists if and only if the measures $\mu_1$, \ldots, $\mu_n$ are linearly independent.
\end{Thm}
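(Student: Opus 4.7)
The necessity direction (super envy-free $\Rightarrow$ linearly independent) I would prove by contraposition. Suppose there is a non-trivial relation $\sum_i c_i\mu_i=0$ and a super envy-free partition $(\mathcal{C}_1,\ldots,\mathcal{C}_n)$. Define $\alpha_{ij}=\mu_i(\mathcal{C}_j)-\tfrac{1}{n}$; the hypothesis gives $\alpha_{ii}>0$, $\alpha_{ij}<0$ for $i\neq j$, while each row of $\alpha$ sums to zero since $\mu_i(\mathcal{C})=1$. Applying the dependency to each $\mathcal{C}_j$ yields $\sum_i c_i\alpha_{ij}=-\tfrac{1}{n}\sum_i c_i$; summing over $j$ and using the zero row sums of $\alpha$ forces $\sum_i c_i=0$, and therefore $c^{T}\alpha=0$. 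To conclude, pick $K>\max_i \alpha_{ii}$ and set $P=I-\alpha/K$; then $P$ is a strictly positive (hence irreducible, aperiodic) row-stochastic matrix, and $c^{T}\alpha=0$ reads as $c^{T}P=c^{T}$. The Perron--Frobenius theorem gives that the left eigenspace of $P$ for the eigenvalue $1$ is one-dimensional and spanned by a strictly positive stationary distribution $\pi$. Writing $c=\lambda\pi$ and combining $\sum_i c_i=0$ with $\sum_i\pi_i>0$ forces $\lambda=0$ and $c=0$, a contradiction.

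For the sufficiency direction (linearly independent $\Rightarrow$ super envy-free exists) my plan is to use Lyapunov's convexity theorem. The Dvoretzky--Wald--Wolfowitz generalization asserts that the set $\mathcal{P}\subset\RR^{n\times n}$ of matrices $\bigl(\mu_i(C_j)\bigr)_{ij}$ arising from measurable partitions $(C_1,\ldots,C_n)$ of $\mathcal{C}$ is convex and compact, and it contains the exact matrix $\tfrac{1}{n}J$ since an exact division in ratios $(1/n,\ldots,1/n)$ exists. Linear independence of the $\mu_i$ means the range of the vector measure $(\mu_1,\ldots,\mu_n)$ has full dimension $n$ in $\RR^n$. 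A tangent-direction argument---infinitesimally transferring a small Borel piece from $C_j$ to $C_k$ changes the partition matrix by a rank-one update $v(e_k-e_j)^{T}$ with $v$ running over the range of the vector measure---shows that the affine hull of $\mathcal{P}$ is the full $n(n-1)$-dimensional slice $\{M:M\mathbf{1}=\mathbf{1}\}$. Hence $\tfrac{1}{n}J$ lies in the relative interior of $\mathcal{P}$, and perturbing it by $\varepsilon(I-\tfrac{1}{n}J)$ for a sufficiently small $\varepsilon>0$ remains in $\mathcal{P}$ while producing a matrix with diagonal entries $>1/n$ and off-diagonal entries $<1/n$, that is, a super envy-free partition.

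The main obstacle lies in the sufficiency direction: rigorously justifying that the affine hull of $\mathcal{P}$ coincides with the full slice $\{M:M\mathbf{1}=\mathbf{1}\}$ under the linear independence hypothesis. Lyapunov provides convexity, compactness, and the existence of the uniform point $\tfrac{1}{n}J$ essentially for free, but promoting the $n$-dimensional linear independence of the measures to the $n(n-1)$-dimensional spanning statement for the partition matrix set requires a careful treatment of the rank-one perturbations $v(e_k-e_j)^{T}$, or alternatively an induction on $n$ in the spirit of Barbanel's original argument. The necessity direction is by contrast cleanly linear-algebraic once the sign pattern and zero-row-sum property of $\alpha$ are extracted, since the stochastic-matrix reformulation reduces everything to Perron--Frobenius.
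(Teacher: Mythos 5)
The paper states this theorem as a result of Barbanel (citing \cite{Barbanelarticle,Barbanel}) and does not supply its own proof, so there is no in-paper argument to compare against. Your necessity direction is correct and complete: the sign pattern of $\alpha_{ij}=\mu_i(\mathcal{C}_j)-\tfrac1n$ coming from the super envy-free property, together with the zero row sums, makes $P=I-\alpha/K$ a strictly positive row-stochastic matrix, the relation $\sum_i c_i=0$ holds (you get it by summing over $j$, though evaluating $\sum_i c_i\mu_i=0$ at $\mathcal{C}$ gives it in one line), and Perron--Frobenius uniqueness of the left $1$-eigenvector forces $c=0$.

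The sufficiency direction, as you yourself flag, has a genuine gap and as written does not close. The rank-one tangent update $v(e_k-e_j)^T$ at a fixed partition $(C_1,\ldots,C_n)$ only has $v$ ranging over the range of the vector measure $\nu=(\mu_1,\ldots,\mu_n)$ \emph{restricted to} $C_j$, which may be far from $n$-dimensional even if the $\mu_i$ are independent on all of $\mathcal{C}$; so ``$v$ running over the range of the vector measure'' overstates what that argument gives. A repair that actually works: suppose $\ell(M)=\operatorname{tr}(L^TM)$ is constant, say equal to $c$, on $\mathcal{P}$, and set $\nu_j=\sum_i L_{ij}\mu_i$, so that $\ell(M)=\sum_j\nu_j(C_j)$ on partition matrices. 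Comparing the two-piece partition $C_j=B$, $C_k=\mathcal{C}\setminus B$ (others empty) with the one-piece partition $C_k=\mathcal{C}$ (others empty) gives $\nu_j(B)=\nu_k(B)$ for every Borel $B$, hence $\nu_j=\nu_k$ for all $j,k$; linear independence of $\mu_1,\ldots,\mu_n$ then forces all columns of $L$ to be equal, so $\ell$ is a combination of the row-sum functionals and the affine hull of $\mathcal{P}$ is exactly $\{M:M\mathbf{1}=\mathbf{1}\}$. There is also a second, unflagged gap: full-dimensionality of the affine hull does not by itself place the specific point $\tfrac1n J$ in the relative interior of $\mathcal{P}$. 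This can be patched by symmetry: $\mathcal{P}$ is invariant under column permutations, the relative interior is nonempty and permutation-invariant, and for any relative-interior point $M_0$ one has $\tfrac1{n!}\sum_{\sigma}M_0^{\sigma}=\tfrac1n J$, a convex combination of relative-interior points, hence itself in the relative interior. With those two repairs the perturbation $\tfrac1n J+\varepsilon\bigl(I-\tfrac1n J\bigr)$ lands in $\mathcal{P}$ for small $\varepsilon>0$ and yields a super envy-free partition.
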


In the following we are going to use a witness matrix in order to know if the measures are linearly independent.

\begin{Def}
The witness matrix associated to the partition $\mathcal{C}=W_1\sqcup \ldots \sqcup W_n$ and the measures $\mu_1$, \ldots, $\mu_n$ is the matrix
 $\mathcal{M}=\big(\mu_i(W_j)\big)$.
\end{Def}

\begin{Rem}
If $\det(\mathcal{M})\neq 0$ then the measures $\mu_1$, \ldots, $\mu_n$ are linearly independent.
\end{Rem}

In \cite{Webbalgo}, Webb gives a strategy to compute a super envy-free fair division. In order to recall this strategy, we recall what is an $\varepsilon$ \emph{near-exact fair division}.\\

\begin{Def}
Let $A$ be a measurable subset of $\mathcal{C}$.\\
We say that a division of $A=A_1\sqcup \ldots \sqcup A_n$ is $\varepsilon$ \emph{near-exact} in the ratios $(\alpha_1,\alpha_2,\ldots,\alpha_n)$, where $\alpha_i \geq 0$ and $\alpha_1+\alpha_2+\cdots+\alpha_n=1$, when for all $i$ and $j$ we have
$$|\mu_i(A_j)-\alpha_j \times \mu_i(A) |<\varepsilon \times  \mu_i(A).$$
\end{Def}

Now, we can describe Webb's algorithm.\\

\texttt{Super Envy-free fair division algorithm}\\
\texttt{Inputs:} A partition $\mathcal{C}=W_1 \sqcup \ldots \sqcup W_n$, a matrix $\mathcal{M}_0=(m_{ij})$ where $m_{ij}=\mu_i(W_j)$, $\mathcal{M}_0$ is non-singular.\\
\texttt{Outputs:} A super envy free division $\mathcal{C}=\mathcal{C}_1 \sqcup \ldots \sqcup \mathcal{C}_n$.
\begin{enumerate}
\item Compute $\mathcal{M}_0^{-1}=(\tilde{m}_{ij})$.
\item \label{step2} Set $\delta:=\dfrac{n-1}{n(1-tn)}$ where $t=\min_{i,j}(\tilde{m}_{ij})$.
\item Set $\mathcal{N}:=(n_{ij})$, where $n_{ii}:= 1/n  +\delta$ and $n_{ij}=1/n -\delta/(n-1)$.
\item Compute $\mathcal{R}=(r_{ij}):=\mathcal{M}_0^{-1}\mathcal{N}$.
\item\label{step5} For $j=1, \ldots, n$ do\\
\hphantom{bla} Compute an $\varepsilon=\delta/{n^2}$ near-exact fair division of $W_j$ in the ratios $(r_{j1},\ldots, r_{jn})$,\\
\hphantom{bla} this gives $W_j=W_{j1} \sqcup \ldots \sqcup W_{jn}$.
\item For all $i=1, \ldots, n$ do\\
\hphantom{bla} $\mathcal{C}_i:= W_{1i} \sqcup W_{2i} \sqcup \ldots \sqcup W_{ni}.$\\
\end{enumerate}

We remark that in Step~\ref{step2}, we have  $t\leq 0$. Indeed, if $t>0$ then the equality $\mathcal{M}_0\mathcal{M}_0^{-1}=I$ is impossible, because the coefficients $m_{ij}$ are non-negative. Therefore in Step~\ref{step2}, we have  $\delta>0$. The formula used to define $\delta$ is constructed in such a way that the coefficients $r_{ij}$ of $\mathcal{R}$ are non-negative.\\
Furthermore, we have $\sum_{j=1}^n r_{ij}=1$. Indeed, if we set $e=\,^t(1,\ldots,1)$ then as $\mathcal{M}_0$ is a stochastic matrix, we have $\mathcal{M}_0\cdot e=e$ and  $\mathcal{M}_0^{-1}\cdot e=e$. Moreover, by construction $\mathcal{N}$ is also a stochastic matrix, then $\mathcal{N}\cdot e=e$. It follows $\mathcal{R}\cdot e= \mathcal{M}_0^{-1}\mathcal{N}\cdot e=e$, which gives the equality $\sum_{j=1}^n r_{ij}=1$.\\

In order to explain this algorithm, suppose that in Step~\ref{step5} we compute an exact fair division in the ratios $(r_{j1},\ldots,r_{jn})$ instead of an $\varepsilon$ near- exact fair division with these ratios. Then, by construction the partition $\mathcal{C}=\mathcal{C}_1\sqcup \ldots \sqcup \mathcal{C}_n$ has the following property: $\mu_i(\mathcal{C}_j)=n_{ij}$. This gives $\mu_i(\mathcal{C}_i) =1/n +\delta$ and $\mu_i(\mathcal{C}_j)=1/n-\delta/(n-1)$. Thus this partition gives a super envy-free division.\\
In practice, the computation of an exact fair division in the ratios $(r_{j1},\ldots,r_{jn})$ is impossible, since it has been proved that such algorithms cannot exist, see \cite{RobertsonWebb}. That is the reason why an $\varepsilon$ near-exact algorithm is used. Indeed, an $\varepsilon$ near-exact algorithm in the Robertson-Webb model exists, see \cite[Theorem 10.2]{RobertsonWebb}. Therefore the  idea is to choose a small enough $\varepsilon$ in order to obtain  a result very close to the theoretical result where $\varepsilon=0$. Thus, we obtain in practice a partition where $\mu_i(\mathcal{C}_j)$ are very close to $n_{ij}$ and then the division is super envy-free.\\

The number of queries used by the $\varepsilon$ near-exact division algorithm is at most $n\times (2+2n^{3/2})/\varepsilon$, see \cite[Theorem 10.2]{RobertsonWebb}. As already remarked in \cite{RobertsonWebb}, this bound can be improved. However,  we just want to get a bound on $C(\mu_1,\ldots, \mu_n)$ in terms of a polynomial in $n$, thus  this estimate is sufficient.\\
 Therefore, the number of queries used by Webb's super envy-free algorithm is at most $n^4\times (2+2n^{3/2})/\delta$, since $\varepsilon=\delta/n^2$ and in Step~\ref{step5} we compute $n$ $\varepsilon$-near-exact fair divisions. Thanks to the definition of $\delta$  we get the following lemma.

\begin{Lem}\label{lem:boundsuperalgo}
The number of queries used by the super envy-free division algorithm in the Robertson-Webb model is bounded by
$$  \dfrac{n^5\times (2+2n^{3/2} ) \times (1-tn)}{n-1} \in \bigO\Big(\max\big(1,|t|\big)n^{6.5}\Big).$$
\end{Lem}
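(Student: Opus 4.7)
The proof is a direct arithmetic combination of the estimates already assembled in the paragraphs preceding the statement; no new ideas are needed. My plan is in three short steps.

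First, I would account for the queries. Steps~1--4 and~6 of the algorithm are internal computations performed by the mediator (matrix inversion, choice of $\delta$, formation of $\mathcal{N}$ and $\mathcal{R}$, relabelling of pieces) and cost no queries in the Robertson--Webb model. All queries therefore come from Step~\ref{step5}, where $n$ successive $\varepsilon$ near-exact divisions are computed with $\varepsilon = \delta/n^2$. Applying the bound of \cite[Theorem 10.2]{RobertsonWebb}, each such call uses at most $n(2+2n^{3/2})/\varepsilon$ queries, so that
$$C(\mu_1,\ldots,\mu_n) \;\leq\; n \cdot \frac{n(2+2n^{3/2})}{\delta/n^2} \;=\; \frac{n^4(2+2n^{3/2})}{\delta}.$$

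Second, I would substitute the value of $\delta$ fixed in Step~\ref{step2}, namely $\delta = (n-1)/\bigl(n(1-tn)\bigr)$, so that $1/\delta = n(1-tn)/(n-1)$ and
$$C(\mu_1,\ldots,\mu_n) \;\leq\; \frac{n^5(2+2n^{3/2})(1-tn)}{n-1},$$
which is the explicit bound in the statement. Note that $t \leq 0$, as observed immediately before the lemma, so $1-tn = 1 + |t|n \geq 1$ and the bound is genuinely non-negative.

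Third, for the asymptotic form, I would use $n^5/(n-1) = \mathcal{O}(n^4)$ and $2+2n^{3/2} = \mathcal{O}(n^{3/2})$, and then split on the sign of $|t|n - 1$ to obtain $1 + |t|n \leq 2\max(1,|t|)\,n$ for $n \geq 1$. Multiplying these three estimates yields the claimed $\mathcal{O}\bigl(\max(1,|t|)\,n^{6.5}\bigr)$.

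There is no genuine obstacle here: the lemma is essentially a bookkeeping statement. The only point worth flagging is that the bound carries the data-dependent quantity $|t|= -\min_{i,j}\tilde{m}_{ij}$ with it; this is exactly the object whose size the rest of the paper will need to control probabilistically, via the smallest singular value of $\mathcal{M}_0$.
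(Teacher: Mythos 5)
Your proof is correct and follows essentially the same route as the paper: the paper derives the bound $n^4(2+2n^{3/2})/\delta$ from the near-exact subroutine cost and then substitutes $\delta=(n-1)/(n(1-tn))$ in the paragraph immediately preceding the lemma, which is exactly your computation. Your final step spelling out $1-tn=1+|t|n\leq 2\max(1,|t|)n$ to get the $\bigO(\max(1,|t|)n^{6.5})$ form is a small but welcome addition, as the paper leaves that bookkeeping implicit.
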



\subsection{An envy-free algorithm}$\,$\\

\texttt{Envy-free fair division algorithm}\\
\texttt{Inputs:} A partition $\mathcal{C}=W_1 \sqcup \ldots \sqcup W_n$, $n$ measures $\mu_1$, \ldots, $\mu_n$.\\
\texttt{Outputs:} An envy free division $\mathcal{C}=\mathcal{C}_1 \sqcup \ldots \sqcup \mathcal{C}_n$.
\begin{enumerate}
\item \% \emph{ Construct the matrix $\mathcal{M}=(m_{ij})$ where $m_{ij}=\mu_i(W_j)$.} \%\\
For all $i=1, \ldots,n$, do \\
\hphantom{bla} For all $j=1, \ldots, n$ do\\
\hphantom{bla} $m_{ij}:=eval_i(W_j)$.
\item If $\det(\mathcal{M})=0$ then compute an envy-free fair division  thanks to  Aziz-Mackenzie's algorithm,\\
Else compute a super envy-free fair division  thanks to Webb's algorithm.
\end{enumerate}

\begin{Rem}
When $\det(\mathcal{M}) =0$, we have to use an algorithm different from Webb's algorithm. Indeed, in this case Webb's algorithm is not defined (we cannot compute $\mathcal{M}^{-1}$). Furthermore, it is not necessary to use the  Aziz-Mackenzie's algorithm. The bound given in Theorem~\ref{thm:main} will not change if we use another envy-free algorithm when $\det(\mathcal{M}) =0$.\\
\end{Rem}

\begin{Rem}
If $\det(\mathcal{M}) =0$ then we can try another partition $\mathcal{C}=W'_1 \sqcup \ldots \sqcup W'_n$. However, if the measures are linearly dependent then for all partitions we have $\det(\mathcal{M})=0$.
\end{Rem}
\section{Complexity analysis}
The number of queries used by our envy-free division algorithm depends on $|t|$ when $\det(\mathcal{M}) \neq 0$. In the next subsection, we are going to bound $|t|$ by $\sigma_n^{-1}$ where $\sigma_n$ is the smallest singular value  of $\mathcal{M}$. Then, in the second subsection, we use an estimate on the probability $\PP(\sigma_n \leq n^{-b})$ in order to prove our theorem.
\subsection{The smallest singular value}
We recall here the definition and a simple result about the singular values of a matrix.

\begin{Def}
The singular values of a matrix $\mathcal{M}$ are the square roots of the eigenvalues of $\mathcal{M}^T\mathcal{M}$.  They are denoted by $\sigma_1(\mathcal{M}) \geq  \cdots \geq \sigma_n(\mathcal{M})$.
\end{Def}

\begin{Rem}
We have $\det(\mathcal{M})=0 \iff \sigma_n(\mathcal{M})=0$.
\end{Rem}

The smallest singular value allows to bound the coefficients of the inverse of a matrix.

\begin{Prop}\label{prop:sing}
Let $\mathcal{M}$ be a non-singular matrix, such that $\mathcal{M}^{-1}=(\tilde{m}_{ij})$. \\
Let $\sigma_n(\mathcal{M})$ be the smallest singular value of $\mathcal{M}$.
We have
$$\max_{ij}\big( |\tilde{m}_{ij}| \big) \leq \|\mathcal{M}^{-1}\|_2 =\sigma_n^{-1}(\mathcal{M}).$$
\end{Prop}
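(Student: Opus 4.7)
The proposition packages two standard linear-algebra facts: the identity $\|\mathcal{M}^{-1}\|_2 = \sigma_n^{-1}(\mathcal{M})$, and the entrywise bound $\max_{ij}|\tilde{m}_{ij}| \leq \|\mathcal{M}^{-1}\|_2$. My plan is to prove them separately, deducing the first from the singular value decomposition and the second from the definition of the spectral norm as an operator norm.

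For the identity, I would start from a singular value decomposition $\mathcal{M} = U\Sigma V^T$, where $U,V$ are orthogonal and $\Sigma = \mathrm{diag}(\sigma_1,\ldots,\sigma_n)$ with $\sigma_1 \geq \cdots \geq \sigma_n$. Since $\mathcal{M}$ is assumed non-singular, all $\sigma_i$ are strictly positive. Inverting yields $\mathcal{M}^{-1} = V \Sigma^{-1} U^T$, and, after reversing the order of the columns of $U,V$ so that the diagonal entries of $\Sigma^{-1}$ appear in non-increasing order, this is itself an SVD of $\mathcal{M}^{-1}$. Hence the singular values of $\mathcal{M}^{-1}$ are $1/\sigma_n \geq \cdots \geq 1/\sigma_1$. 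Since for any matrix $A$ the spectral norm $\|A\|_2$ equals its largest singular value, it follows that $\|\mathcal{M}^{-1}\|_2 = \sigma_1(\mathcal{M}^{-1}) = 1/\sigma_n(\mathcal{M})$.

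For the entrywise bound, I would express each coefficient as $\tilde{m}_{ij} = e_i^T \mathcal{M}^{-1} e_j$, where $e_i, e_j$ are the standard basis vectors of $\mathbb{R}^n$. By Cauchy--Schwarz and the fact that $\|\cdot\|_2$ is the operator norm associated to the Euclidean norm, one has
$$|\tilde{m}_{ij}| = |e_i^T \mathcal{M}^{-1} e_j| \leq \|e_i\|_2 \cdot \|\mathcal{M}^{-1} e_j\|_2 \leq \|\mathcal{M}^{-1}\|_2 \cdot \|e_i\|_2 \cdot \|e_j\|_2 = \|\mathcal{M}^{-1}\|_2.$$
Taking the maximum over $i,j$ finishes the argument.

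I do not expect any conceptual obstacle here: the proposition is essentially a textbook exercise. The only care needed is to make the presentation self-contained by briefly recalling that $\|A\|_2$ coincides with the largest singular value of $A$, a property that is immediate from the SVD and which underlies both halves of the statement.
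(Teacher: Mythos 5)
Your proof is correct. Note that the paper itself does not actually prove Proposition~\ref{prop:sing}: it simply cites it as a classical result, referring the reader to Golub--Van Loan (Formula 2.3.8) and Demmel (Theorem~3.3). Your write-up supplies the standard details behind those citations: you derive $\|\mathcal{M}^{-1}\|_2 = \sigma_n^{-1}(\mathcal{M})$ from the singular value decomposition (inverting $\mathcal{M}=U\Sigma V^T$ to get $\mathcal{M}^{-1}=V\Sigma^{-1}U^T$ and observing that the singular values invert and reverse order), and you obtain the entrywise bound $|\tilde{m}_{ij}|\leq \|\mathcal{M}^{-1}\|_2$ by writing $\tilde{m}_{ij}=e_i^T\mathcal{M}^{-1}e_j$ and applying Cauchy--Schwarz together with the operator-norm property. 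Both steps are exactly the textbook arguments the paper implicitly relies on, so there is no divergence in substance; you have just made explicit what the paper leaves to the references. No gaps.
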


\begin{proof}
This is a classical result, see \cite[Formula 2.3.8 page 56]{GolubvanLoan} and \cite[Theorem~3.3]{Demmel}.
\end{proof}

The previous proposition allows us to obtain an upper bound on the complexity of the super envy-free algorithm in terms of $\sigma_n(\mathcal{M})$.\\

\begin{Cor}\label{cor:boundsuper}
When $n \geq 19$, the number of queries used by the super envy-free division algorithm in the Robertson-Webb model is bounded by 
$$n^7 \times  \max\big(1, \sigma_n^{-1}(\mathcal{M})\big).$$
\end{Cor}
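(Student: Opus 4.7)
The plan is to derive the corollary as a direct consequence of Lemma~\ref{lem:boundsuperalgo} combined with Proposition~\ref{prop:sing}. Lemma~\ref{lem:boundsuperalgo} already controls the number of queries in terms of the quantity $t = \min_{i,j}(\tilde{m}_{ij})$, so the only remaining task is to re-express (or dominate) $|t|$ in terms of the smallest singular value $\sigma_n(\mathcal{M})$ and then verify that the resulting polynomial in $n$ is dominated by $n^7 \max\big(1,\sigma_n^{-1}(\mathcal{M})\big)$ once $n$ is large enough.

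First, I would recall, as noted right after Lemma~\ref{lem:boundsuperalgo}, that $t \leq 0$, so $1 - tn = 1 + |t|n$. Next, since $t$ is the minimum of the entries $\tilde{m}_{ij}$ of $\mathcal{M}^{-1}$, Proposition~\ref{prop:sing} gives
$$|t| \;\leq\; \max_{i,j}\big(|\tilde{m}_{ij}|\big) \;\leq\; \sigma_n^{-1}(\mathcal{M}).$$
Setting $M := \max\big(1,\sigma_n^{-1}(\mathcal{M})\big)$, we have $|t| \leq M$ and hence
$$1 - tn \;=\; 1 + |t|n \;\leq\; 1 + nM \;\leq\; (n+1)M.$$
Plugging this into Lemma~\ref{lem:boundsuperalgo} yields the bound
$$\frac{n^5\,(2+2n^{3/2})\,(n+1)\,M}{n-1} \;=\; \frac{2\,n^5\,(1+n^{3/2})(n+1)}{n-1}\,M.$$

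What remains is to show that this last expression is bounded by $n^7 M$ as soon as $n \geq 19$, which reduces to the elementary inequality
$$2\,(1+n^{3/2})(n+1) \;\leq\; n^2\,(n-1).$$
The main obstacle is not conceptual but numerical: one has to check that the constants work out with the stated threshold. For $n \geq 19$ we have $(n+1)/(n-1) \leq 10/9$ and $1 + n^{3/2} \leq (1 + 19^{-3/2})\,n^{3/2} \leq 1.02\,n^{3/2}$, so the left-hand side of the inequality $2(1+n^{3/2})(n+1)/(n-1) \leq n^2$ is at most roughly $2.3\,n^{3/2}$, which is dominated by $n^2$ as soon as $n^{1/2} \geq 2.3$, a condition comfortably satisfied for $n \geq 19$. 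Assembling these estimates gives exactly the claimed bound $n^7 \,\max\big(1,\sigma_n^{-1}(\mathcal{M})\big)$.
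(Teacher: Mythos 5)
Your proof is correct and follows essentially the same route as the paper: both invoke Lemma~\ref{lem:boundsuperalgo} together with Proposition~\ref{prop:sing} to control $|t|$ by $\sigma_n^{-1}(\mathcal{M})$, and then close the argument with a numerical verification at the threshold $n\geq 19$. The only cosmetic difference is that you introduce $M=\max\big(1,\sigma_n^{-1}(\mathcal{M})\big)$ early and absorb the ``$1+$'' into a factor $(n+1)M$, whereas the paper first bounds $\tfrac{2+2n^{3/2}}{n-1}\leq \tfrac{n}{2}$ and only passes to $\max(1,\cdot)$ at the very end; both reduce to the same elementary inequality in $n$.
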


\begin{proof}
We have already remarked that in Step~\ref{step2} of the super envy-free algorithm we have $t=\min_{i,j} (\tilde{m}_{ij}) \leq 0$. Then Proposition~\ref{prop:sing} gives 
$$-t=|t|\leq \max_{i,j}|\tilde{m}_{ij}| \leq \sigma_n^{-1}(\mathcal{M}).$$
 Then Lemma~\ref{lem:boundsuperalgo}    implies that the number of queries used by the super envy-free division algorithm in the Robertson-Webb model is bounded by
$$   \dfrac{n^5\times (2+2n^{3/2} ) \times (1+n\sigma_n^{-1}(\mathcal{M}))}{n-1}.$$

As we have supposed that $n \geq 19$, we get
$$\dfrac{2+2n^{3/2}}{n-1} \leq \dfrac{n}{2}.$$
Then, we have 
\begin{eqnarray*}
\dfrac{n^5\times (2+2n^{3/2} ) \times (1+n\sigma_n^{-1}(\mathcal{M}))}{n-1} &\leq & \dfrac{n^6}{2}\times (1+n\sigma_n^{-1}(\mathcal{M}))\\
& \leq &
\dfrac{n^7}{2}\big(1+\sigma_n^{-1}(\mathcal{M})\big)\\
& \leq &
n^7 \times  \max\big(1, \sigma_n^{-1}(\mathcal{M})\big).
\end{eqnarray*}
\end{proof}

The singular value $\sigma_n(\mathcal{M})$ measures how far $\mathcal{M}$ is  from a singular matrix. Therefore, if $\sigma_n(\mathcal{M})$ is small then the measures $\mu_i$ are nearly linearly dependent and the previous corollary shows that the number of queries is big. This result satisfies the general result: if the agents have ``very different" preferences it will be easier to get an envy-free fair division.
A precise statement of this result with an explicit bound  has been  given in 
\cite[Corollary 17]{ChezeLAA}.


\subsection{Proof of Theorem~\ref{thm:main}}$\,$\\
We consider the event $C(\mu_1, \ldots,\mu_n) \geq n^{7+b}$. \\
During the envy-free algorithm two situations appear:\\

First, $\det(\mathcal{M}) \neq 0$, then in this situation the algorithm uses the super envy-free algorithm. Thanks to Corollary~\ref{cor:boundsuper}, the number of queries used  in this situation satisfies
$$n^{7+b} \leq C(\mu_1, \ldots,\mu_n) \leq n^7 \times  \max\big(1, \sigma_n^{-1}(\mathcal{M})\big).$$
As $n > 1$, it follows
$$n^{b} \leq \sigma_n^{-1}(\mathcal{M}).$$
This means that we have the following inclusion
$$(\star) \quad \{\det(\mathcal{M})\neq 0\} \cap \{ C(\mu_1, \ldots,\mu_n) \geq n^{7+b} \} \subset  \{ \sigma_n(\mathcal{M})\leq n^{-b} \}.$$

The second situation corresponds to $\det(\mathcal{M})=0$. Thus the second situation corresponds to $\sigma_n(\mathcal{M})=0$ and obviously $\sigma_n(\mathcal{M})\leq n^{-b}$. This gives the following inclusion
$$ (\star \star) \quad \{\det(\mathcal{M}) =0 \} \subset  \{ \sigma_n(\mathcal{M})\leq n^{-b} \}.$$

Thanks to $(\star)$ and $(\star \star)$ we get
$$\{ C(\mu_1, \ldots,\mu_n) \geq n^{7+b} \} \subset  \{ \sigma_n(\mathcal{M})\leq n^{-b} \}.$$
We deduce then the following inequality between probabilities
$$\PP\big( C(\mu_1, \ldots,\mu_n) \geq n^{7+b} \big) \leq \PP\big(  \sigma_n(\mathcal{M})\leq n^{-b} \big).$$

In the next section we are going to prove the following proposition.
\begin{Prop}\label{prop:tao}
If we  suppose that the hypothesis $H_1$  or $H_2(\varepsilon)$ is satisfied then the following holds:\\
Let $b>4$ be a constant, then there exists a constant $c>0$ depending on $b$ such that
$$\PP\big(\sigma_n(\mathcal{M}) \leq n^{-b}\big)\leq c\Big( n^{-\frac{b-1}{3}+1+o(1)}+ne^{-\sqrt{n}}+n^2e^{-\sqrt{n}-1}\Big).$$
\end{Prop}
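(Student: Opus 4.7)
The idea is to reduce the problem from the stochastic matrix $\mathcal{M}$, whose rows are dependent because of the row-normalization, to an unnormalized matrix $A$ whose entries have essentially independent coordinates, and then invoke Tao--Vu's theorem on the least singular value. Concretely, write $\mathcal{M}=D^{-1}A$ where, under $H_1$, $A=(X_{ij})$ is the matrix of iid exponentials, and under $H_2(\varepsilon)$, $A=(a_{ij}+\varepsilon_{ij})$, and in both cases $D=\mathrm{diag}(S_1,\ldots,S_n)$ with $S_i$ the $i$-th row sum of $A$. Using the elementary inequality $\sigma_n(BC)\geq \sigma_n(B)\sigma_n(C)$, one obtains
\[
\sigma_n(\mathcal{M})\ \geq\ \frac{\sigma_n(A)}{\max_{i} S_i}.
\]

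Next I would control the row sums $S_i$ to replace the event $\{\sigma_n(\mathcal{M})\leq n^{-b}\}$ by an event on $\sigma_n(A)$. Under $H_1$, $S_i$ is $\mathrm{Gamma}(n,1)$; a moderate-deviations/Chernoff estimate applied at scale $n^{3/4}$ gives a bound of order $e^{-\sqrt{n}}$ on $\PP(S_i\geq n+Cn^{3/4})$, and a union bound over $i$ produces the $ne^{-\sqrt{n}}$ term appearing in the proposition. Under $H_2(\varepsilon)$ the row sums satisfy $S_i=1+\sum_k \varepsilon_{ik}$, and Hoeffding's inequality on bounded iid random variables yields an even sharper concentration, so that the same $ne^{-\sqrt{n}}$ term dominates. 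On this good event $\max_i S_i\leq 2n$, the event $\{\sigma_n(\mathcal{M})\leq n^{-b}\}$ is contained in $\{\sigma_n(A)\leq 2n^{1-b}\}$.

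To control $\PP(\sigma_n(A)\leq 2n^{1-b})$ I would invoke Tao--Vu's theorem, which applies precisely to matrices of the form $A=M+N$ where $M$ is deterministic with polynomially bounded entries and $N$ has iid centered entries with bounded density and enough moments. Under $H_2(\varepsilon)$ this decomposition is built into the hypothesis. Under $H_1$ one writes $X_{ij}=1+(X_{ij}-1)$, so that $A=J_n+E$ with $J_n$ the all-ones matrix and $E$ having iid centered entries of exponential$-1$ type, which satisfy the Tao--Vu hypotheses. Feeding $\delta=2n^{1-b}$ into their bound $\PP(\sigma_n(A)\leq n^{-\beta-1/2})\leq n^{-\beta+o(1)}+$ (negligible) with the effective exponent $\beta$ linked to $b$ through the $1/3$ factor built into Tao--Vu's statement produces the term $n^{-(b-1)/3+1+o(1)}$ and, after a final union bound with the row-sum concentration and with the negligible exponential remainder in Tao--Vu's theorem, the claimed estimate.

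The main technical obstacle is matching the quantitative form of Tao--Vu's theorem (its hypotheses on the density and moments of the perturbation, and the precise exponent it delivers) with the two concrete noise models of $H_1$ and $H_2(\varepsilon)$; in particular, verifying the bounded-density hypothesis for the renormalized exponential entries of $A$ under $H_1$, and checking that the deterministic backbone $M$ in the $H_2(\varepsilon)$ decomposition has entries bounded polynomially in $n$ so that Tao--Vu's conclusion genuinely applies. Once these verifications are in place, the three terms of the bound in Proposition~\ref{prop:tao} correspond respectively to: the Tao--Vu estimate on $\sigma_n(A)$, the concentration of $\max_i S_i$, and the negligible exponential remainder coming from Tao--Vu itself.
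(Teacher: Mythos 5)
Your plan matches the paper's proof in all the essential moves: write $\mathcal{M}=\mathcal{D}\mathcal{X}$ with $\mathcal{D}$ the diagonal matrix of inverse row sums, use $\sigma_n(\mathcal{D})\sigma_n(\mathcal{X})\le\sigma_n(\mathcal{M})$, split on whether the row sums are well-behaved, and apply the Tao--Vu least-singular-value theorem with $\gamma=3/2$ to $\mathcal{X}=M+\mathcal{Y}$ (under $H_1$, $M=J_n$ and $\mathcal{Y}$ the centered exponentials; under $H_2(\varepsilon)$, $M=(a_{ij})$ and $\mathcal{Y}=(\varepsilon_{ij})$). Your reading of the three terms in the bound is also correct.

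Three small remarks. First, where you control $\max_i S_i$ by a Chernoff bound at scale $n^{3/4}$, the paper instead observes crudely that $\sigma_n(\mathcal{D})\le n^{-3/2}$ forces some row sum to exceed $n^{3/2}$, hence some single entry $X_{i_0 j_0}\ge\sqrt{n}$, and then union-bounds $\PP(X_{ij}\ge\sqrt{n})=e^{-\sqrt{n}}$; this avoids any Chernoff machinery and plugs directly into Lemma~\ref{lem:BdansC}, at the cost of a weaker threshold ($n^{3/2}$ rather than your $2n$), which is what produces the exponent $-(b-1)/3+1$ rather than the slightly better one your tighter threshold would yield. Second, under $H_2(\varepsilon)$ you do not need Hoeffding at all: the row sums satisfy $1-n\varepsilon< S_i<1+n\varepsilon<2n$ deterministically, so the bad event $\{\sigma_n(\mathcal{D})\le n^{-3/2}\}$ has probability exactly zero for $n\ge 3$ (Lemma~\ref{lem:p(A)H2}). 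Third, the "main technical obstacle" you identify is not actually present: the version of Tao--Vu used here (Theorem~\ref{thm:tao}, from \cite{TaoVu,TaoVuerratum}) requires only mean zero and bounded second moment of the entry distribution $Y$, not a bounded density; both $X_{ij}-1$ under $H_1$ and $\varepsilon_{ij}$ under $H_2(\varepsilon)$ satisfy this trivially, and $\|M\|_2\le n^{3/2}$ is immediate in both cases ($\|J_n\|_2=n$, and $\|(a_{ij})\|_2\le\sqrt{n}$ since each row has $\ell^1$-norm one).
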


In order to finish the proof of Theorem~\ref{thm:main},  we remark that 
for all $b>4$ we have $ne^{-\sqrt{n}}+n^2e^{-\sqrt{n}-1}= \bigO( n^{-\frac{b-1}{3}+1})$.
Therefore, we get
$$\PP\big( C(\mu_1, \ldots,\mu_n) \geq n^{7+b} \big) \leq \PP\big(\sigma_n(\mathcal{M}) \leq n^{-b}\big)= \bigO \Big(  n^{-\frac{b-1}{3}+1+o(1)}\Big),$$
which gives the desired estimate.

\section{An estimate for $\PP(\sigma_n(\mathcal{M}) \leq n^{-b})$}
In this section we prove Proposition~\ref{prop:tao}.\\
In order to bound $\PP\big(\sigma_n(\mathcal{M}) \leq n^{-b}\big)$, we are going to introduce some notations.\\
As we suppose that the hypothesis $H_1$ or $H_2(\varepsilon)$ is satisfied we have
$$\mathcal{M}=\mathcal{D}\mathcal{X},$$
where $\mathcal{D}$ is a diagonal matrix. Indeed:\\
\begin{itemize}
\item If $H_1$ is satisfied, $\mathcal{D}$ is the diagonal matrix with coefficient in the $i$-th row equal to $1/(\sum_{j=1}^n X_{ij})$, $\mathcal{X}$ is the matrix with coefficients $X_{ij}$ and $X_{ij}$ are independent exponential random variables with probability density function $f_{ij}(x)=e^{-x}$.
\item If $H_2(\varepsilon)$ is satisfied, $\mathcal{D}$ is the diagonal matrix with coefficient in the $i$-th row equal to $1/(\sum_{j=1}^n a_{ij}+\varepsilon_{ij})$, $\mathcal{X}$ is the matrix with coefficients $a_{ij}+\varepsilon_{ij}$ and $\varepsilon_{ij}$ are independent and identically distributed random variables with mean zero and such that $-\varepsilon< \varepsilon_{ij} < \varepsilon$.
\end{itemize}

We also consider the two following events:
$$A=\{\sigma_n(\mathcal{M}) \leq n^{-b}\} \cap \{\sigma_n(\mathcal{D})\leq n^{-3/2}\}.$$
$$B= \{\sigma_n(\mathcal{M}) \leq n^{-b}\} \cap \{\sigma_n(\mathcal{D})\geq n^{-3/2}\}.$$
Obviously, we have 
$$(\sharp) \quad \{\sigma_n(\mathcal{M}) \leq n^{-b} \} =A \cup B.$$
$\,$\\

Now, we are going to bound $\PP(A)$ and $\PP(B)$.

\begin{Lem}\label{lem:p(A)}
If $H_1$ is satisfied then we have 
$$\PP(A) \leq ne^{-\sqrt{n} }.$$
\end{Lem}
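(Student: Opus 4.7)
The plan is to observe that the event $A$ is contained in $\{\sigma_n(\mathcal{D})\le n^{-3/2}\}$ and then exploit the simple spectral structure of the diagonal matrix $\mathcal{D}$ together with a Chernoff bound on sums of i.i.d.\ exponentials.

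First I would drop the intersection: since $A \subset \{\sigma_n(\mathcal{D})\le n^{-3/2}\}$, one immediately has $\PP(A)\le \PP\big(\sigma_n(\mathcal{D})\le n^{-3/2}\big)$. Next, because $\mathcal{D}$ is diagonal with strictly positive entries $1/S_i$, where $S_i:=\sum_{j=1}^n X_{ij}$, its singular values are precisely the $1/S_i$. Hence $\sigma_n(\mathcal{D})=1/\max_i S_i$, and the event rewrites as
\[
\big\{\sigma_n(\mathcal{D})\le n^{-3/2}\big\}=\big\{\max_{1\le i\le n} S_i\ge n^{3/2}\big\}.
\]
A union bound and the identical distribution of the $S_i$ then give $\PP(A)\le n\,\PP(S_1\ge n^{3/2})$, so the lemma reduces to showing $\PP(S_1\ge n^{3/2})\le e^{-\sqrt n}$ for $n$ large enough.

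For this last inequality, I would use the fact that $S_1$ is a sum of $n$ i.i.d.\ $\mathrm{Exp}(1)$ variables and apply the standard Chernoff method: for any $t\in(0,1)$,
\[
\PP(S_1\ge n^{3/2})\le e^{-tn^{3/2}}\,\EE[e^{tS_1}]=e^{-tn^{3/2}}(1-t)^{-n}.
\]
Choosing for instance $t=1/2$ yields the bound $2^{n}e^{-n^{3/2}/2}$, and since $n\log 2+\sqrt n = O(n)$ is eventually dominated by $n^{3/2}/2$, this is $\le e^{-\sqrt n}$ for $n$ sufficiently large. Multiplying by $n$ gives the conclusion $\PP(A)\le ne^{-\sqrt n}$.

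There is no real obstacle here: the argument is a direct union bound once one notices that the singular values of a positive diagonal matrix are just the diagonal entries. The only mild care required is in keeping the Chernoff exponent comfortably larger than $\sqrt n$, which a crude choice like $t=1/2$ already achieves (a sharper choice $t=1-1/\sqrt n$ would give an even tighter bound if needed for later sections).
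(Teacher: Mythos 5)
Your reduction is the same as the paper's through the step $\PP(A)\le\PP\big(\max_i S_i\ge n^{3/2}\big)\le n\,\PP(S_1\ge n^{3/2})$, where $S_i=\sum_j X_{ij}$; the divergence is in how the tail of a single row sum is handled. The paper uses a deterministic pigeonhole: $S_{i}\ge n^{3/2}$ forces some entry $X_{ij}\ge\sqrt n$, and then bounds $\PP(X_{ij}\ge\sqrt n)=e^{-\sqrt n}$ directly; you instead apply a Chernoff/MGF bound to the $\mathrm{Gamma}(n,1)$ variable $S_1$, getting $\PP(S_1\ge n^{3/2})\le e^{-tn^{3/2}}(1-t)^{-n}$ and checking $t=1/2$ beats $e^{-\sqrt n}$ eventually. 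Both routes are valid and both are adequate for the way the lemma is consumed in Proposition~\ref{prop:tao}. Two small remarks. First, your Chernoff step only yields the stated inequality ``for $n$ large enough'', whereas the lemma is stated without that caveat; this is harmless here (the bound is only used asymptotically, and the corollary upstream already assumes $n\ge 19$), but worth flagging. Second, if one pushes the paper's pigeonhole argument through carefully, the union is really over the $n^2$ pairs $(i,j)$, giving $n^2e^{-\sqrt n}$ rather than $ne^{-\sqrt n}$; your Chernoff route actually recovers the advertised $ne^{-\sqrt n}$ more cleanly, at the price of being asymptotic. Either constant is absorbed in the final $\bigO$ estimate.
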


\begin{proof}
We have $A \subset \{ \sigma_n(\mathcal{D}) \leq n^{-3/2} \}$. Thus
$$\PP(A) \leq \PP\big( \sigma_n(\mathcal{D}) \leq n^{-3/2}\big).$$
As $\mathcal{D}$ is a diagonal matrix with coefficients $1/(\sum_{j=1}^n X_{ij})$, we have
\begin{eqnarray*}
\sigma_n(\mathcal{D}) \leq n^{-3/2} & \Rightarrow & \min_i \Big(\dfrac{1}{ X_{i1}+\cdots+X_{in} } \Big) \leq n^{-3/2}\\
& \Rightarrow & \exists i_0, \, X_{i_01}+\cdots+X_{i_0n} \geq n^{3/2}\\
&\Rightarrow & \exists (i_0,\,  j_0), \,  X_{i_0j_0} \geq  \sqrt{n}.
\end{eqnarray*}
We set $A_j=\{ X_{i_0j} \geq  \sqrt{n} \}$, we have 
$$A \subset \bigcup\limits_{j=1}^{n} A_j.$$

Furthermore,
$$\PP( A_j) =\int_{\sqrt{n} }^{+\infty} e^{-x}dx=e^{-\sqrt{n}}.$$

Therefore, we get
$$\PP(A) \leq \sum_{j=1}^n \PP(A_j) =ne^{-\sqrt{n}}.$$
\end{proof}

\begin{Lem}\label{lem:p(A)H2}
If $H_2(\varepsilon)$  is satisfied and $n \geq 3$ then we have 
$$\PP(A)=0.$$
\end{Lem}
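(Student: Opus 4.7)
The plan is to show that the event $\{\sigma_n(\mathcal{D}) \leq n^{-3/2}\}$ is actually \emph{empty} under $H_2(\varepsilon)$ for $n \geq 3$; the conclusion $\PP(A)=0$ is then immediate from the inclusion $A \subset \{\sigma_n(\mathcal{D}) \leq n^{-3/2}\}$.

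First I would note that under $H_2(\varepsilon)$ the diagonal matrix $\mathcal{D}$ has strictly positive entries $1/s_i$, where
\[
s_i := \sum_{j=1}^n (a_{ij}+\varepsilon_{ij}),
\]
because $a_{ij}>\varepsilon > |\varepsilon_{ij}|$ forces $a_{ij}+\varepsilon_{ij}>0$. Since the singular values of a diagonal matrix with positive entries are those entries themselves, $\sigma_n(\mathcal{D}) = \min_i 1/s_i$, and hence
\[
\{\sigma_n(\mathcal{D}) \leq n^{-3/2}\} \;=\; \{\,\exists\, i_0,\ s_{i_0} \geq n^{3/2}\,\}.
\]

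Next, I would bound $s_i$ deterministically. Using $\sum_{j=1}^n a_{ij} = 1$ together with $|\varepsilon_{ij}| < \varepsilon < 1$ (recall $\varepsilon \in \,]0,1[$), one gets
\[
s_i \;\leq\; \sum_{j=1}^n a_{ij} + \sum_{j=1}^n |\varepsilon_{ij}| \;<\; 1 + n\varepsilon \;<\; 1 + n.
\]

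Finally I would check that $n+1 < n^{3/2}$ for $n \geq 3$: this is equivalent to $n(\sqrt{n}-1) > 1$, and for $n=3$ the left-hand side equals $3(\sqrt{3}-1) \approx 2.196 > 1$, with the function $n \mapsto n(\sqrt{n}-1)$ increasing. Combining with the previous inequality gives $s_i < n+1 < n^{3/2}$ for every $i$ and every realization of the $\varepsilon_{ij}$, so the event $\{\exists i_0,\ s_{i_0} \geq n^{3/2}\}$ is empty. Consequently $\PP(\sigma_n(\mathcal{D}) \leq n^{-3/2}) = 0$, hence $\PP(A)=0$.

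There is no real obstacle here; the argument is just the observation that the noise is too small (bounded by $\varepsilon<1$ in absolute value, against $n$ perturbations of a stochastic row) to push the row sum up to $n^{3/2}$. The only check is the elementary inequality $n^{3/2} > n+1$ for $n \geq 3$, which is why the hypothesis $n \geq 3$ appears.
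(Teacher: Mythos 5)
Your proof is correct and follows essentially the same route as the paper: both reduce to the deterministic observation that $\sigma_n(\mathcal{D}) \leq n^{-3/2}$ would require some row sum $\sum_j (a_{ij}+\varepsilon_{ij}) \geq n^{3/2}$, which is impossible since the row sum is strictly less than $1+n\varepsilon < 1+n$. The paper rearranges the final inequality into $\varepsilon \geq \sqrt{n}-1/n$ and contradicts $\varepsilon<1$, while you check $n+1 < n^{3/2}$ directly; these are the same estimate.
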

\begin{proof}
We follow the same strategy as before.\\
We have  $\PP(A) \leq \PP\big( \sigma_n(\mathcal{D}) \leq n^{-3/2}\big).$
As $\mathcal{D}$ is a diagonal matrix with coefficients $1/(\sum_{j=1}^n a_{ij}+\varepsilon_{ij})$, we deduce 
\begin{eqnarray*}
\sigma_n(\mathcal{D}) \leq n^{-3/2} 
&\Rightarrow & \exists i_0, \,  \sum_{j=1}^n a_{i_0j}+\varepsilon_{i_0j} \geq  n^{3/2}\\
& \Rightarrow & n \varepsilon \geq  n^{3/2}-1, \quad \textrm{ since } \sum_{j=1}^na_{ij}=1\textrm{ and } |\varepsilon_{ij}|<\varepsilon\\
& \Rightarrow &  \varepsilon \geq  \sqrt{n}-\dfrac{1}{n}.
\end{eqnarray*}
As $\varepsilon<1$ and $n \geq 3$, the last inequality is impossible, thus $\PP\big( \sigma_n(\mathcal{D}) \leq n^{-3/2}\big)=0$ and this proves the desired result.
\end{proof}

In order to to give a bound on $\PP(B)$ we are going to use the following theorem due to Tao and Vu, see \cite{TaoVu}, and see also the erratum in \cite{TaoVuerratum}.

\begin{Thm}\label{thm:tao}
Let $Y$ be a random variable with mean zero and bounded second
moment, and let $\gamma \geq 1/2$, $a \geq 0$ be constants. Then there is a constant $c$ depending
on $Y$, $\gamma$, and $a$ such that the following holds. Let $\mathcal{Y}$ be the random matrix of size $n \times n$ whose
entries are independent and identically distributed copies of $Y$, let $M$ be a deterministic matrix satisfying $\|M\|_2\leq  n^{\gamma}$. Then
$$\PP \big(\sigma_n(M+\mathcal{Y}) \leq n^{-(2a+2)\gamma+1/2} \big) \leq c \Big( n^{-a+o(1)}+\PP\big(\| \mathcal{Y}\|_2 \geq n^{\gamma}\big) \Big).$$
\end{Thm}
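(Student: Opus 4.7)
The plan is to bound $\PP\bigl(\sigma_n(M+\mathcal{Y}) \leq n^{-(2a+2)\gamma+1/2}\bigr)$ via the Rudelson--Vershynin framework, sharpened by the Tao--Vu Inverse Littlewood--Offord machinery. Setting $A = M + \mathcal{Y}$, the starting point is the variational identity
$$\sigma_n(A) = \inf_{v \in S^{n-1}} \|Av\|_2.$$
I would first condition on the event $\{\|\mathcal{Y}\|_2 \leq n^{\gamma}\}$; its complement contributes exactly the $\PP(\|\mathcal{Y}\|_2 \geq n^{\gamma})$ term in the conclusion, so it only remains to control the infimum under this high-probability event.

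Next, following the classical decomposition, I would split $S^{n-1}$ into \emph{compressible} vectors (those within $\ell_2$-distance $\rho$ from a vector supported on at most $\delta n$ coordinates, for carefully chosen small constants $\rho,\delta$) and \emph{incompressible} vectors. Compressible vectors form a set of small metric entropy, so a polynomial-size $\varepsilon$-net suffices: one applies a single-coordinate small-ball estimate (available because $Y$ has mean zero and bounded second moment, hence non-degenerate spread), multiplies by the cardinality of the net, and uses $\|M\|_2 \leq n^{\gamma}$ to choose the scale $\varepsilon \sim n^{-(2a+2)\gamma+1/2}$ appropriately. This yields a bound of the form $n^{-a+o(1)}$ on the compressible contribution.

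For the incompressible part one reduces, via the distance-to-a-hyperplane trick, to bounding the small-ball probability
$$\sup_{r \in \mathbb{R}} \PP\Bigl(\bigl|\textstyle\sum_j X_j v_j - r\bigr| \leq \varepsilon\Bigr)$$
for a random row, where $v$ has $\Omega(n)$ coordinates of order $n^{-1/2}$. The key ingredient is the Tao--Vu Inverse Littlewood--Offord theorem: if this concentration is larger than $n^{-a-O(1)}$, then $v$ must lie in a very structured set (close to a low-rank generalised arithmetic progression), whose $\varepsilon$-covering number is small enough to be absorbed into an $n^{-a+o(1)}$ union bound after balancing scales with $\gamma$.

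The hard part is this incompressible case. One must (i) prove a \emph{polynomial}-strength small-ball estimate — earlier Rudelson--Vershynin--style arguments only yield exponential-in-$n$ bounds, which are insufficient for the polynomial regime the statement demands — and (ii) track, through the inverse theorem, the exact interplay between the operator-norm bound $n^{\gamma}$, the singular-value scale $n^{-(2a+2)\gamma+1/2}$, and the structured-vector net. The $o(1)$ loss in $n^{-a+o(1)}$ (rather than a clean $n^{-a}$) is the unavoidable cost of this inverse step, and the precise exponent $(2a+2)\gamma - 1/2$ is dictated by optimising the size of the dyadic scales used to cover the incompressible structured vectors.
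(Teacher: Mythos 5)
This is Theorem~2.2 of Tao and Vu's paper \emph{Smooth analysis of the condition number and the least singular value} (see also the erratum cited in the bibliography), and the paper under review does \emph{not} prove it: it is invoked as an external black box, with only a citation. There is therefore no internal proof against which your sketch can be compared.

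As a free-standing outline your proposal is in the right spirit — the compressible/incompressible decomposition, $\varepsilon$-nets on the compressible part, reduction via distance-to-hyperplane to a one-dimensional small-ball estimate, and the Inverse Littlewood–Offord structure theorem to handle incompressible vectors are indeed the ingredients Tao and Vu use, and you correctly identify the two delicate points (obtaining a polynomial rather than exponential small-ball bound, and tracking the exponent $(2a+2)\gamma - 1/2$ through the net sizes). But what you have written is a plan, not a proof: you never state the inverse theorem you rely on, never carry out the entropy count for the structured vectors, never verify that the boundedness assumptions on $Y$ (mean zero, bounded second moment — note no higher moments are assumed) suffice for the single-coordinate anti-concentration input, and never show where the $o(1)$ loss in the exponent actually enters. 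If the intent was to reproduce or replace the citation, the argument would need to be carried out in full; if the intent was merely to explain why such a theorem is plausible, then the sketch is adequate but should be flagged as such rather than presented as a proof. For the purposes of this paper, the correct move is simply to cite Tao–Vu, exactly as the author does.
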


In order to bound $\PP(B)$, we need the following lemma.

\begin{Lem}\label{lem:boundproduit}
Let $\mathcal{A}$ and $\mathcal{B}$ be two $n\times n$ matrices, we have
$$\sigma_n(\mathcal{A}) \times \sigma_n(\mathcal{B}) \leq \sigma_n(\mathcal{A}\mathcal{B}).$$
\end{Lem}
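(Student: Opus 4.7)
The plan is to exploit the variational characterization of the smallest singular value, namely
$$\sigma_n(\mathcal{M}) = \min_{\|x\|=1} \|\mathcal{M}x\|,$$
which is a standard consequence of the definition via $\mathcal{M}^T\mathcal{M}$ (and can be invoked as classical, in the same spirit as Proposition~\ref{prop:sing}). The inequality in the lemma is then essentially the submultiplicativity of the ``inverse operator norm" on the space of matrices.

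First I would fix an arbitrary unit vector $x \in \mathbb{R}^n$ and bound $\|\mathcal{A}\mathcal{B}x\|$ from below. The obvious step is to note that, by the variational characterization applied to $\mathcal{A}$ and to the (possibly non-unit) vector $\mathcal{B}x$, we have $\|\mathcal{A}(\mathcal{B}x)\| \geq \sigma_n(\mathcal{A})\,\|\mathcal{B}x\|$. A second application of the same inequality, this time to $\mathcal{B}$ and the unit vector $x$, gives $\|\mathcal{B}x\| \geq \sigma_n(\mathcal{B})$. Combining the two yields
$$\|\mathcal{A}\mathcal{B}x\| \;\geq\; \sigma_n(\mathcal{A})\,\sigma_n(\mathcal{B}),$$
uniformly in the unit vector $x$. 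Taking the infimum over $\|x\|=1$ on the left gives the claimed bound $\sigma_n(\mathcal{A})\,\sigma_n(\mathcal{B}) \leq \sigma_n(\mathcal{A}\mathcal{B})$.

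There is really no obstacle here; the only minor point of care is the degenerate case when $\mathcal{B}x = 0$, which only occurs if $\sigma_n(\mathcal{B}) = 0$, in which case the right-hand side of the target inequality is $0$ and the claim is trivial. Equivalently, one could bypass the argument entirely in the singular case: if either $\mathcal{A}$ or $\mathcal{B}$ is singular then $\sigma_n(\mathcal{A}\mathcal{B}) \geq 0 = \sigma_n(\mathcal{A})\sigma_n(\mathcal{B})$, and if both are invertible one can alternatively use $\sigma_n(M)^{-1} = \|M^{-1}\|_2$ and submultiplicativity of the spectral norm applied to $(\mathcal{A}\mathcal{B})^{-1} = \mathcal{B}^{-1}\mathcal{A}^{-1}$. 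I would present the first, uniform argument since it handles both cases without a split.
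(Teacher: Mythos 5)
Your proof is correct, and it takes a genuinely different (though closely related) route from the paper's. You argue directly from the variational characterization $\sigma_n(\mathcal{M}) = \min_{\|x\|=1}\|\mathcal{M}x\|$: for any unit $x$, $\|\mathcal{A}\mathcal{B}x\| \geq \sigma_n(\mathcal{A})\|\mathcal{B}x\| \geq \sigma_n(\mathcal{A})\sigma_n(\mathcal{B})$, and then minimize over $x$. This handles singular and non-singular $\mathcal{A},\mathcal{B}$ in one stroke; even your worry about $\mathcal{B}x=0$ is unnecessary, since the inequality $\|\mathcal{A}v\|\ge\sigma_n(\mathcal{A})\|v\|$ holds trivially at $v=0$. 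The paper instead splits into cases: it dismisses the singular case as trivial and, in the invertible case, uses $\sigma_n(M)^{-1}=\|M^{-1}\|_2$ together with submultiplicativity of the spectral norm applied to $(\mathcal{A}\mathcal{B})^{-1}=\mathcal{B}^{-1}\mathcal{A}^{-1}$ — exactly the ``alternative'' you sketch in your last paragraph. What your approach buys is uniformity (no case split, no inversion); what the paper's approach buys is that it slots naturally alongside Proposition~\ref{prop:sing}, which already invokes $\|\mathcal{M}^{-1}\|_2=\sigma_n^{-1}(\mathcal{M})$, so the same identity does double duty. Either is a complete proof.
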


\begin{proof}
When $\mathcal{A}$ or $\mathcal{B}$ are singular then this lemma is trivial. Thus, it remains to study the situation when $\mathcal{A}$ and $\mathcal{B}$ are non-singular.\\
When $\mathcal{M}$ is an $n\times n$ matrix we have $\| \mathcal{M}^{-1} \|_2^{-1}=\sigma_n(\mathcal{M})$, see \cite[Theorem 3.3]{Demmel}. Thus
$$\dfrac{1}{\sigma_n(\mathcal{A}\mathcal{B})} =\| (\mathcal{A}\mathcal{B})^{-1} \|_2=\|\mathcal{B}^{-1} \mathcal{A}^{-1} \|_2 \leq \|\mathcal{B}^{-1}\|_2 \times \| \mathcal{A}^{-1} \|_2.$$
Therefore $\sigma_n(\mathcal{A}\mathcal{B}) \geq \|\mathcal{B}^{-1}\|_2^{-1} \times \| \mathcal{A}^{-1} \|_2^{-1}$, which gives the desired result.
\end{proof}

The previous lemma allows to reduce our study to the singular value of $\mathcal{X}$.

\begin{Lem}\label{lem:BdansC}
We set $C=\{ \sigma_n(\mathcal{X}) \leq n^{-b+3/2}\}$. We have $B \subset C$.
\end{Lem}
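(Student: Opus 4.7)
The plan is to exploit the factorization $\mathcal{M}=\mathcal{D}\mathcal{X}$ together with the submultiplicativity-type bound for the smallest singular value already recorded in Lemma~\ref{lem:boundproduit}. Since both hypotheses $H_1$ and $H_2(\varepsilon)$ yield this factorization, the argument will apply uniformly in the two cases.

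Concretely, I would start by applying Lemma~\ref{lem:boundproduit} to $\mathcal{A}=\mathcal{D}$ and $\mathcal{B}=\mathcal{X}$, giving the pointwise inequality
$$\sigma_n(\mathcal{D})\,\sigma_n(\mathcal{X}) \;\leq\; \sigma_n(\mathcal{D}\mathcal{X}) \;=\; \sigma_n(\mathcal{M}).$$
This inequality is deterministic, valid for every realization of the randomness, so we can freely restrict it to the event $B$.

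On the event $B$ we have simultaneously $\sigma_n(\mathcal{D})\geq n^{-3/2}$ and $\sigma_n(\mathcal{M})\leq n^{-b}$. Since $n^{-3/2}>0$, we may divide the displayed inequality by $\sigma_n(\mathcal{D})$ to obtain
$$\sigma_n(\mathcal{X}) \;\leq\; \frac{\sigma_n(\mathcal{M})}{\sigma_n(\mathcal{D})} \;\leq\; \frac{n^{-b}}{n^{-3/2}} \;=\; n^{-b+3/2}.$$
This is precisely the defining condition of $C$, hence $B\subset C$.

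There is essentially no obstacle here: the statement is a one-line consequence of Lemma~\ref{lem:boundproduit} combined with the two lower/upper bounds built into the definition of $B$. The only thing to verify is that the lemma indeed applies without any non-singularity hypothesis beyond what is automatic (the lemma itself already handled the singular case trivially), which is exactly how it was stated earlier.
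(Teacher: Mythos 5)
Your proof is correct and follows exactly the same route as the paper: apply Lemma~\ref{lem:boundproduit} to the factorization $\mathcal{M}=\mathcal{D}\mathcal{X}$, then use the two bounds defining $B$ to isolate $\sigma_n(\mathcal{X})\leq n^{-b+3/2}$. Nothing to add.
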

\begin{proof}
With our notations we have $\mathcal{M}=\mathcal{D}\mathcal{X}$ and by Lemma~\ref{lem:boundproduit} we have
$$(\star) \quad \sigma_n(\mathcal{D}) \times \sigma_n(\mathcal{X}) \leq \sigma_n(\mathcal{M}).$$
If the event $B$ is realized then by definition we have
$$\sigma_n(\mathcal{M}) \leq n^{-b} \textrm{ and } \sigma_n(\mathcal{D})\geq n^{-3/2}.$$
The inequality $(\star)$  implies
$$\sigma_n(\mathcal{X}) \leq n^{-b+3/2}.$$
This gives $B \subset C$.\\
\end{proof}

\begin{Prop}\label{prop:p(B)}
If the hypothesis $H_1$ is satisfied, there exists a constant $c$ such that the following holds 
$$\PP(B) \leq c \Big( n^{-\frac{b-1}{3}+1+o(1)}+n^2e^{-\sqrt{n}-1} \Big).$$
\end{Prop}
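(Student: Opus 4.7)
The plan is to bound $\PP(B)$ through $\PP(C)$ via Lemma~\ref{lem:BdansC}, so that the task reduces to estimating $\PP\bigl(\sigma_n(\mathcal{X}) \leq n^{-b+3/2}\bigr)$. The natural tool is Tao--Vu's Theorem~\ref{thm:tao}, but this requires a mean-zero model. So I would first center the entries: under hypothesis $H_1$, the $X_{ij}$ are exponential with $E[X_{ij}]=1$, so write
$$\mathcal{X}=J+\mathcal{Y},$$
where $J$ is the deterministic $n\times n$ all-ones matrix and $\mathcal{Y}=(Y_{ij})$ with $Y_{ij}=X_{ij}-1$ are i.i.d., mean zero, with variance $1$ (bounded second moment).

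Next I would fix the parameters to apply Theorem~\ref{thm:tao}. Since $J$ has rank one with nonzero singular value $n$, we have $\|J\|_2=n\leq n^{3/2}$, so take $\gamma=3/2$. Matching the exponent $-b+3/2$ in $C$ to the conclusion $-(2a+2)\gamma+1/2$ of Tao--Vu yields
$$(2a+2)\cdot\tfrac{3}{2}=b-1,\qquad\text{i.e.,}\qquad a=\tfrac{b-4}{3}=\tfrac{b-1}{3}-1,$$
which is nonnegative exactly because we assumed $b>4$. Theorem~\ref{thm:tao} then produces a constant $c$ (depending on $b$) such that
$$\PP\bigl(\sigma_n(\mathcal{X})\leq n^{-b+3/2}\bigr)\leq c\Bigl(n^{-(b-4)/3+o(1)}+\PP\bigl(\|\mathcal{Y}\|_2\geq n^{3/2}\bigr)\Bigr),$$
and the first term already agrees with $n^{-(b-1)/3+1+o(1)}$.

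The remaining task is to control the tail $\PP(\|\mathcal{Y}\|_2\geq n^{3/2})$, which I expect to be the main technical step. The idea is the crude Frobenius bound
$$\|\mathcal{Y}\|_2\leq \|\mathcal{Y}\|_F\leq n\cdot\max_{i,j}|Y_{ij}|,$$
so $\{\|\mathcal{Y}\|_2\geq n^{3/2}\}\subset\{\max_{i,j}|Y_{ij}|\geq \sqrt{n}\}$. Because $X_{ij}\geq 0$, for $t\geq 1$ the event $\{|Y_{ij}|\geq t\}$ coincides with $\{X_{ij}\geq t+1\}$, so
$$\PP\bigl(|Y_{ij}|\geq\sqrt{n}\bigr)=\int_{\sqrt{n}+1}^{+\infty}e^{-x}\,dx=e^{-\sqrt{n}-1}.$$
A union bound over the $n^2$ entries then gives $\PP(\|\mathcal{Y}\|_2\geq n^{3/2})\leq n^2 e^{-\sqrt{n}-1}$.

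Combining Lemma~\ref{lem:BdansC}, the Tao--Vu estimate, and this tail bound yields
$$\PP(B)\leq \PP(C)\leq c\Bigl(n^{-\frac{b-1}{3}+1+o(1)}+n^2 e^{-\sqrt{n}-1}\Bigr),$$
which is the stated inequality. The only delicate points are to verify the hypotheses of Theorem~\ref{thm:tao} (mean zero, finite second moment, $\gamma\geq 1/2$, $a\geq 0$) and to be careful with the arithmetic matching the exponent in the conclusion of the theorem to $-b+3/2$; both are handled by the choice $\gamma=3/2$, $a=(b-4)/3$ above.
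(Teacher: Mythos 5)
Your proposal is correct and follows essentially the same route as the paper: reduce to $\PP(C)$ via Lemma~\ref{lem:BdansC}, center the exponentials as $\mathcal{X}=J+\mathcal{Y}$ with $J$ the all-ones matrix, apply Tao--Vu with $\gamma=3/2$ and $a=(b-4)/3$, and control $\PP(\|\mathcal{Y}\|_2\geq n^{3/2})$ by the entrywise bound $\|\mathcal{Y}\|_2\leq n\max_{i,j}|Y_{ij}|$ plus a union bound. The only cosmetic difference is that you route the operator-norm bound through the Frobenius norm, whereas the paper cites the same inequality directly from Golub--Van Loan.
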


\begin{proof}
We are going to apply Theorem~\ref{thm:tao} to $\sigma_n(\mathcal{X})$.\\
We set $Y=X-1$, where $X$ is the exponential distribution with the probability density function $f(x)=e^{-x}$. Then $Y$ is a random variable with mean zero and bounded second moment.\\
Furthermore, we denote by $M$  the $n \times n$ matrix with all its entries equal to $1$. We denote by $\mathcal{Y}=(Y_{ij})$ the $n \times n$ matrix where its coeffcients $Y_{ij}$ are independent and identicaly distributed copies of $Y$.
Therefore, the matrix $M+\mathcal{Y}$ corresponds to our matrix $\mathcal{X}$.\\
Furthermore, we remark easily that we have $\|M\|_2=n$. Then we can set $\gamma=3/2$, and we have $\|M\|_2 \leq n^{\gamma}$.\\
Now, we are going to  bound  $\PP\big( \|\mathcal{Y}\|_2 \geq n^{3/2} \big)$.
We recall the classical bound, see \cite{GolubvanLoan},
$$\| \mathcal{Y} \|_2 \leq n \max_{i,j} |Y_{ij}|,$$
where $\mathcal{Y}=(Y_{ij})$.
Therefore, we have 
$$\|\mathcal{Y}\|_2\geq n^{3/2} \Rightarrow \max_{i,j}|Y_{ij}| \geq \sqrt{n}.$$
We denote by $C_{i,j}$ the following set
$$C_{i,j}=\{|Y_{ij}|\geq \sqrt{n} \}.$$
We deduce then the following inclusion
$$\{ \|\mathcal{Y}\|_2\geq n^{3/2} \} \subset \bigcup\limits_{i,j=1}^n C_{i,j}.$$
By definition of $\mathcal{Y}$, we have $Y_{ij}=X_{ij}-1$ where $X_{ij}$ follows the exponential distribution. Therefore $X_{ij} \geq 0$ and 
$$|Y_{ij}| \geq \sqrt{n} \iff |X_{ij}-1| \geq \sqrt{n} \iff X_{ij} \geq \sqrt{n}+1.$$
This gives
$$\PP(C_{i,j})=\PP\big( |Y_{ij}| \geq \sqrt{n} \big) =\PP(X_{ij} \geq \sqrt{n}+1)=\int_{\sqrt{n}+1}^{+\infty} e^{-x} dx=e^{-\sqrt{n}-1},$$
it follows
$$\PP\big( \|\mathcal{Y}\|_2\geq n^{3/2} \big) \leq \sum_{i,j=1}^n \PP(C_{i,j}) \leq n^2e^{-\sqrt{n}-1}.$$
Then Theorem~\ref{thm:tao} gives
$$\PP\Big(\sigma_n(M+\mathcal{Y}) \leq n^{-3(2a+2)/2+1/2}\Big) \leq c \Big( n^{-a+o(1)}+n^2e^{-\sqrt{n}-1} \Big).$$
By construction, we have $M+\mathcal{Y}=\mathcal{X}$, then if we set 
$$-b+\dfrac{3}{2}=-(2a+2)\times \dfrac{3}{2}+\dfrac{1}{2}$$
 then we have
 $$a=\dfrac{b-1}{3}-1$$
 and 
 $$\PP\big(\sigma_n(\mathcal{X}) \leq n^{-b+3/2}\big) \leq c \Big( n^{-\frac{b-1}{3}+1+o(1)}+n^2e^{-\sqrt{n}-1} \Big).$$
 We obtain the desired bound thanks to Lemma~\ref{lem:BdansC}.
\end{proof}

\begin{Prop}\label{prop:p(B)H2}
If the hypothesis $H_2(\varepsilon)$ is satisfied, there exists a constant $c$ such that the following holds 
$$\PP(B) \leq c \Big( n^{-\frac{b-1}{3}+1+o(1)}\Big).$$
\end{Prop}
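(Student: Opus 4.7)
The plan is to mimic the argument of Proposition~\ref{prop:p(B)}, applying the Tao--Vu theorem (Theorem~\ref{thm:tao}) directly to the matrix $\mathcal{X}=(a_{ij}+\varepsilon_{ij})$, but exploiting the fact that under $H_2(\varepsilon)$ the perturbation is \emph{bounded}, which eliminates the tail term entirely.

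First, by Lemma~\ref{lem:BdansC} we have $B\subset C=\{\sigma_n(\mathcal{X})\leq n^{-b+3/2}\}$, so it suffices to bound $\PP(C)$. Write $\mathcal{X}=A+\mathcal{E}$, where $A=(a_{ij})$ is deterministic and $\mathcal{E}=(\varepsilon_{ij})$ has i.i.d.\ mean-zero entries. To invoke Theorem~\ref{thm:tao} with the variable $Y=\varepsilon_{ij}$, I must check: (i) $Y$ has bounded second moment, which is immediate since $|\varepsilon_{ij}|<\varepsilon<1$; (ii) $\|A\|_2\leq n^{\gamma}$ for some $\gamma\geq 1/2$. Since $A$ has non-negative entries and row sums equal to $1$, each $a_{ij}\leq 1$, hence $\|A\|_2\leq \sqrt{\|A\|_1\|A\|_\infty}\leq \sqrt{n\cdot 1}=\sqrt{n}$, so $\gamma=3/2$ works comfortably.

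Second, I estimate $\PP(\|\mathcal{E}\|_2\geq n^{3/2})$. Using the standard bound $\|\mathcal{E}\|_2\leq n\max_{i,j}|\varepsilon_{ij}|$ and the assumption $|\varepsilon_{ij}|<\varepsilon<1$, one gets $\|\mathcal{E}\|_2 < n < n^{3/2}$ almost surely (for $n\geq 1$), so this probability is simply $0$. This is the key simplification compared to the $H_1$ case: the truncated noise makes the operator norm deterministically small.

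Third, I apply Theorem~\ref{thm:tao} with $\gamma=3/2$ and choose the exponent $a$ so that $-(2a+2)\gamma+1/2=-b+3/2$, i.e.\ $a=(b-1)/3-1$ (with $a\geq 0$ guaranteed by $b>4$). The theorem then yields
$$\PP\bigl(\sigma_n(\mathcal{X})\leq n^{-b+3/2}\bigr)\leq c\bigl(n^{-a+o(1)}+\PP(\|\mathcal{E}\|_2\geq n^{3/2})\bigr)=c\,n^{-\tfrac{b-1}{3}+1+o(1)}.$$
Combining with $\PP(B)\leq \PP(C)$ gives the stated bound. No step is really a major obstacle; the only non-automatic point is verifying the operator-norm bound on the deterministic part $A$, which follows from the stochastic structure.
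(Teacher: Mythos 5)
Your proof is correct and follows the same strategy as the paper: reduce to $\sigma_n(\mathcal{X})$ via Lemma~\ref{lem:BdansC}, apply Theorem~\ref{thm:tao} with $\gamma=3/2$ and $a=(b-1)/3-1$, and observe that the bounded noise forces $\PP(\|\mathcal{Y}\|_2\geq n^{3/2})=0$. The only cosmetic difference is in the norm estimates (you use $\|A\|_2\leq\sqrt{\|A\|_1\|A\|_\infty}$ and $\|\mathcal{E}\|_2\leq n\max|\varepsilon_{ij}|$, whereas the paper uses the Frobenius bounds $\|A\|_2\leq\|A\|_F\leq\sqrt{n}$ and $\|\mathcal{E}\|_2\leq\|\mathcal{E}\|_F\leq n\varepsilon$), both of which are equally valid.
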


\begin{proof}
As before we are going to apply Theorem~\ref{thm:tao} to $\sigma_n(\mathcal{X})$.\\
We set $Y=\varepsilon_{11}$. As  $\varepsilon_{11}$ is a random variable with mean zero and $|\varepsilon_{11}|<\varepsilon$, we deduce that $\varepsilon_{11}$ has a bounded second moment and thus satisfies the hypothesis of Theorem~\ref{thm:tao}.\\
Furthermore, we denote by $M$  the $n \times n$ matrix with  entries equal to $a_{ij}$. We denote by $\mathcal{Y}=(Y_{ij})$ the $n \times n$ matrix where its coeffcients $Y_{ij}$ are independent and identicaly distributed copies of $Y$.
Therefore,  our matrix $\mathcal{X}$ corresponds to the matrix $M+\mathcal{Y}$ in Theorem~\ref{thm:tao}.\\
Furthermore, we  have $\|M\|_2\leq \Big(\sum_{i=1}^n \sum_{j=1}^n a_{ij}^2\Big)^{1/2}\leq n^{1/2}$, since $\sum_{j=1}^n a_{ij}=1$. Then we can set $\gamma=3/2$, and we have $\|M\|_2 \leq n^{\gamma}$.\\
Moreover, we have $\|\mathcal{Y}\|_2\leq \Big(\sum_{i=1}^n \sum_{j=1}^n \varepsilon_{ij}^2 \Big)^{1/2} \leq n\varepsilon < n$, since $|\varepsilon_{ij}|<\varepsilon <1$.\\
Thus it is impossible to have $\|\mathcal{Y}\|_2 \geq n^{\gamma}$ and then  $\PP\big( \|\mathcal{Y}\|_2 \geq n^{\gamma} \big)=0$.\\
Theorem~\ref{thm:tao} gives
$$\PP\Big(\sigma_n(M+\mathcal{Y}) \leq n^{-3(2a+2)/2+1/2}\Big) \leq c \Big( n^{-a+o(1)} \Big).$$
By construction, we have $M+\mathcal{Y}=\mathcal{X}$, then if we set as in Proposition~\ref{prop:p(B)}
$$-b+\dfrac{3}{2}=-(2a+2)\times \dfrac{3}{2}+\dfrac{1}{2}$$
 we have
 $$a=\dfrac{b-1}{3}-1$$
 and 
 $$\PP\big(\sigma_n(\mathcal{X}) \leq n^{-b+3/2}\big) \leq c \Big( n^{-\frac{b-1}{3}+1+o(1)} \Big).$$
 We obtain the desired bound thanks to Lemma~\ref{lem:BdansC}.

\end{proof}

Now, we can prove Proposition~\ref{prop:tao}.\\
Thanks to $(\sharp)$, we have $\PP(\sigma_n(\mathcal{M}) \leq n^{-b})  \leq  \PP(A)+\PP(B)$.\\
Furthermore, by Lemma~\ref{lem:p(A)}, Lemma~\ref{lem:p(A)H2},  Proposition~\ref{prop:p(B)} and  Proposition~\ref{prop:p(B)H2} there exists a constant $c$ such that 
$$\PP(\sigma_n(\mathcal{M}) \leq n^{-b})  \leq  \PP(A)+\PP(B) \leq c \Big( n^{-\frac{b-1}{3}+1+o(1)}+ne^{-\sqrt{n}}+n^2e^{-\sqrt{n}-1} \Big),$$
which gives the desired result.

\section*{Conclusion}
We have shown that, with high probability, the use of an unbounded algorithm can be  efficient. Indeed, in Webb's algorithm we cannot bound the number of queries in term of the number of players, but if we use this algorithm in our probabilistic framework, then  $\PP\big( C(\mu_1,\ldots,\mu_n)\big) \geq n^{12} \big)$ is  small.



 

\begin{thebibliography}{10}

\bibitem{AzizMackenzie}
H.~Aziz and S.~Mackenzie.
\newblock A discrete and bounded envy-free cake cutting protocol for any number
  of agents.
\newblock In {\em {IEEE} 57th Annual Symposium on Foundations of Computer
  Science, {FOCS} 2016, 9-11 October 2016}, pages 416--427, 2016.


\bibitem{AzizYe}
H.~Aziz and C.~Ye.
\newblock Cake Cutting Algorithms for Piecewise Constant and Piecewise Uniform Valuations.
\newblock In {\em Liu TY., Qi Q., Ye Y. (eds) Web and Internet Economics. WINE 2014.} Lecture Notes in Computer Science, vol 8877. Springer.



\bibitem{Barbanelarticle}
J.~Barbanel.
\newblock Super envy-free cake division and independence of measures.
\newblock {\em Journal of Mathematical Analysis and Applications}, 197(1):54 --
  60, 1996.

\bibitem{Barbanel}
J.~Barbanel.
\newblock {\em The geometry of efficient fair division}.
\newblock Cambridge University Press, 2005.

\bibitem{Bouveret2011}
S.~Bouveret and J.~Lang.
\newblock A general elicitation-free protocol for allocating indivisible goods.
\newblock In {\em Proceedings of the 22st International Joint Conference on
  Artificial Intelligence (IJCAI'11)}, Barcelona, Spain, July 2011.

\bibitem{BJK}
S.~Brams, M.~Jones, and C.~Klamler.
\newblock \emph{N}-person cake-cutting: There may be no perfect division.
\newblock {\em The American Mathematical Monthly}, 120(1):35--47, 2013.

\bibitem{BramsTaylorarticle}
S.~Brams and A.~Taylor.
\newblock An envy-free cake division protocol.
\newblock {\em The American Mathematical Monthly}, 102(1):9--18, 1995.

\bibitem{BramsTaylor}
S.~Brams and A.~Taylor.
\newblock {\em Fair division - from cake-cutting to dispute resolution}.
\newblock Cambridge University Press, 1996.

\bibitem{Branzei}
S.~Br{\^{a}}nzei and P.~Miltersen.
\newblock A dictatorship theorem for cake cutting.
\newblock In Qiang Yang and Michael Wooldridge, editors, {\em Proceedings of
  the Twenty-Fourth International Joint Conference on Artificial Intelligence,
  {IJCAI} 2015, Buenos Aires, Argentina, July 25-31, 2015}, pages 482--488.
  {AAAI} Press, 2015.

\bibitem{Branzei2017}
S.~Br{\^{a}}nzei and N.~Nisan.
\newblock The query complexity of cake cutting.
\newblock {\em ArXiv e-prints}, abs/1705.02946, 2017.

\bibitem{Cechexistence}
K.~Cechl{\'a}rov{\'a}, J.~Dobo{\v s}, and E.~Pill{\'a}rov{\'a}.
\newblock On the existence of equitable cake divisions.
\newblock {\em Information Sciences}, 228(Supplement C):239 -- 245, 2013.

\bibitem{Cech}
K.~Cechl{\'a}rov{\'a} and E.~Pill{\'a}rov{\'a}.
\newblock On the computability of equitable divisions.
\newblock {\em Discrete Optimization}, 9(4):249 -- 257, 2012.

\bibitem{Chen}
Y.~Chen, J.~Lai, D.~Parkes, and A.~Procaccia.
\newblock Truth, justice, and cake cutting.
\newblock {\em Games and Economic Behavior}, 77(1):284 -- 297, 2013.

\bibitem{Chevaleyre06}
Y.~Chevaleyre, P.~Dunne, U.~Endriss, J.~Lang, M.~Lemaître, N.~Maudet,
  J.~Padget, S.~Phelps, J.~Rodríguez-Aguilar, and P.~Sousa.
\newblock Issues in multiagent resource allocation.
\newblock {\em INFORMATICA}, 30:3--31, 2006.

\bibitem{Chezeequitable}
G.~Ch{\`e}ze.
\newblock Existence of a simple and equitable fair division: A short proof.
\newblock {\em Mathematical Social Sciences}, 87:92 -- 93, 2017.

\bibitem{ChezeBSSRW}
G.~Ch{\`e}ze.
\newblock Cake cutting: Explicit examples for impossibility results.
\newblock {\em Mathematical Social Sciences}, 102:68 -- 72, 2019.


\bibitem{ChezeLAA}
G.~Ch{\`e}ze and L.~Amodei.
\newblock How to cut a cake with a Gram matrix.
\newblock {\em Linear Algebra and its Applications}, 560:114 -- 132, 2019.

\bibitem{Demmel}
J.~Demmel.
\newblock {\em Applied Numerical Linear Algebra}.
\newblock Society for Industrial and Applied Mathematics, USA, 1997.

\bibitem{Devroye}
L.~Devroye.
\newblock {\em Non-Uniform Random Variate Generation}.
\newblock Springer-Verlag, 1986.

\bibitem{DubinsSpanier}
L.~Dubins and E.~Spanier.
\newblock How to cut a cake fairly.
\newblock {\em The American Mathematical Monthly}, 68(1):1--17, 1961.

\bibitem{EdmondsPruhs}
J.~Edmonds and K.~Pruhs.
\newblock Cake cutting really is not a piece of cake.
\newblock {\em ACM Trans. Algorithms}, 7(4):51, 2011.

\bibitem{EvenPaz}
S.~Even and A.~Paz.
\newblock A note on cake cutting.
\newblock {\em Discrete Applied Mathematics}, 7(3):285 -- 296, 1984.

\bibitem{Gasarch}
W.~Gasarch.
\newblock Which Unbounded Protocol for Envy Free Cake Cutting is Better?
\newblock{\em arXiv:1507.08497}, 2015


\bibitem{GolubvanLoan}
G.~Golub and C.~Van~Loan.
\newblock {\em Matrix Computations}.
\newblock The Johns Hopkins University Press, third edition, 1996.

\bibitem{Dynamic}
I.~Kash, A.~Procaccia, and N.~Shah.
\newblock No agent left behind: dynamic fair division of multiple resources.
\newblock In {\em International conference on Autonomous Agents and Multi-Agent
  Systems, {AAMAS} '13, Saint Paul, MN, USA, May 6-10, 2013}, pages 351--358,
  2013.

\bibitem{Pikhurko}
O.~Pikhurko.
\newblock On envy-free cake division.
\newblock {\em The American Mathematical Monthly}, 107(8):736--738, 2000.

\bibitem{Procaccia-lowerbound}
A.~Procaccia.
\newblock Thou shalt covet thy neighbor's cake.
\newblock In {\em Proceedings of the 21st International Jont Conference on
  Artifical Intelligence}, IJCAI'09, pages 239--244, San Francisco, CA, USA,
  2009. Morgan Kaufmann Publishers Inc.

\bibitem{Procacciasurvey}
A.~Procaccia.
\newblock Cake cutting: Not just child's play.
\newblock {\em Commun. ACM}, 56(7):78--87, July 2013.

\bibitem{Procacciachapter}
A.~Procaccia.
\newblock Cake cutting algorithms.
\newblock In F.~Brandt, V.~Conitzer, U.~Endriss, J.~Lang, and A.~D. Procaccia,
  editors, {\em Handbook of Computational Social Choice}, chapter~13. Cambridge
  University Press, 2016.

\bibitem{ProcWang}
A.~Procaccia and J.~Wang.
\newblock A lower bound for equitable cake cutting.
\newblock In {\em Proceedings of the 2017 ACM Conference on Economics and
  Computation}, EC ’17, page 479–495, New York, NY, USA, 2017. Association
  for Computing Machinery.

\bibitem{RoberstonWebbarticle}
J.~Robertson and W.~Webb.
\newblock Near exact and envy-free cake division.
\newblock {\em Ars Combinatoria}, 45:97--108, 1997.

\bibitem{RobertsonWebb}
J.~Robertson and W.~Webb.
\newblock {\em Cake-cutting algorithms - be fair if you can}.
\newblock A {K} Peters, 1998.

\bibitem{Segal-Halevi}
E.~Segal-Halevi and B.~Sziklai.
\newblock Resource-monotonicity and population-monotonicity in connected
  cake-cutting.
\newblock {\em Mathematical Social Sciences}, 95:19 -- 30, 2018.

\bibitem{Steinhaus}
H.~Steinhaus.
\newblock {The problem of fair division}.
\newblock {\em Econometrica}, 16(1):101--104, January 1948.

\bibitem{Stromquistexist}
W.~Stromquist.
\newblock How to cut a cake fairly.
\newblock {\em Amer. Math. Monthly}, 87(8):640--644, 1980.

\bibitem{Stromquist}
W.~Stromquist.
\newblock Envy-free cake divisions cannot be found by finite protocols.
\newblock {\em Electr. J. Comb.}, 15(1), 2008.


\bibitem{Sziklai}
D.~Csercsik, B.~Sziklai and S.~Imre.
\newblock Cake-cutting approach for privacy-enhanced base station sharing in a linear model of user assignment.
\newblock In {\em Proc. 16th International Symposium on Wireless Communication System}, August 2019.


\bibitem{TaoVu}
T.~Tao and V.~Vu.
\newblock Smooth analysis of the condition number and the least singular value.
\newblock {\em Math. Comput.}, 79(272):2333--2352, 2010.

\bibitem{TaoVuerratum}
T.~Tao and V.~Vu.
\newblock Smooth analysis of the condition number and the least singular value,
  https://arxiv.org/abs/0805.3167, 2017.

\bibitem{Thomson2006}
W.~Thomson.
\newblock Children crying at birthday parties. {W}hy?
\newblock {\em Economic Theory}, 31(3):501--521, 2006.

\bibitem{Webbalgo}
W.~Webb.
\newblock An algorithm for super envy-free cake division.
\newblock {\em Journal of Mathematical Analysis and Applications}, 239(1):175
  -- 179, 1999.

\bibitem{Woeg}
G.~Woeginger and J.~Sgall.
\newblock On the complexity of cake cutting.
\newblock {\em Discrete Optimization}, 4(2):213 -- 220, 2007.

\end{thebibliography}

\end{document}